\newcommand{\N}{\mathbb{N}} 
\newcommand{\Z}{\mathbb{Z}} 
\newcommand{\R}{\mathbb{R}} 
\newcommand{\func}[3]{#1:#2\rightarrow#3}
\newcommandx{\seq}[3][2=k\in\N,3={}]{(#1)_{#2}^{#3}}
\DeclareMathOperator{\prox}{prox} 
\DeclareMathOperator*{\minimize}{minimize} 
\DeclareMathOperator*{\argmin}{arg\,min} 
\DeclareMathOperator{\clamp}{clamp} 
\DeclareMathOperator{\round}{round} 
\newtheorem{theorem}{Theorem}
\newtheorem{prop}{Proposition}
\newtheorem{lem}{Lemma}
\newcounter{mysubequations}
\renewcommand{\themysubequations}{(\alph{mysubequations})}
\newcommand{\mysubnumber}{\refstepcounter{mysubequations}\themysubequations}
\newcommand{\remove}[1]{}
\newcommand{\drawrectangleonfigure}[7]{%
    \begin{tikzpicture}
        \node[anchor=south west,inner sep=0] (image) at (0,0) {\includegraphics[width=#2]{#3}};
        \begin{scope}[x={(image.south east)},y={(image.north west)}]
            \draw[#1, very thick] (#4,#5) rectangle (#6,#7);
        \end{scope}
    \end{tikzpicture}
}
\journal{Information Sciences}
\begin{document}

\begin{frontmatter}

    \title{Quantization-aware Matrix Factorization for\\ Low Bit Rate Image Compression} 

    \author[KUL]{Pooya~Ashtari\corref{cor}\fnref{contr}}
    \author[KUL]{Pourya~Behmandpoor\fnref{contr}}
    \author[Std]{Fateme~Nateghi~Haredasht}
    \author[Std]{Jonathan~H.~Chen}
    \author[KUL]{Panagiotis~Patrinos}
    \author[KUL]{Sabine~Van~Huffel}

    \affiliation[KUL]{organization={Department of Electrical Engineering (ESAT), STADIUS Center, KU Leuven},
                country={Belgium}}

    \affiliation[Std]{organization={Stanford Center for Biomedical Informatics Research, Stanford University},
                city={CA},
                country={USA}}

    \cortext[cor]{Corresponding author: Pooya Ashtari (pooya.ashtari@esat.kuleuven.be).}
    \fntext[contr]{Pooya Ashtari and Pourya Behmandpoor contributed equally to this work.}

    \begin{abstract}
    Lossy image compression is essential for efficient transmission and storage. Traditional compression methods mainly rely on discrete cosine transform (DCT) or singular value decomposition (SVD), both of which represent image data in continuous domains and, therefore, necessitate carefully designed quantizers. Notably, these methods consider quantization as a separate step, where quantization errors cannot be incorporated into the compression process. The sensitivity of these methods, especially SVD-based ones, to quantization errors significantly degrades reconstruction quality. To address this issue, we introduce a quantization-aware matrix factorization (QMF) to develop a novel lossy image compression method. QMF provides a low-rank representation of the image data as a product of two smaller factor matrices, with elements constrained to bounded integer values, thereby effectively integrating quantization with low-rank approximation. We propose an efficient, provably convergent iterative algorithm for QMF using a block coordinate descent (BCD) scheme, with subproblems having closed-form solutions. Our experiments on the Kodak and CLIC 2024 datasets demonstrate that our QMF compression method consistently outperforms JPEG at low bit rates below 0.25 bits per pixel (bpp) and remains comparable at higher bit rates. We also assessed our method's capability to preserve visual semantics by evaluating an ImageNet pre-trained classifier on compressed images. Remarkably, our method improved top-1 accuracy by over 5 percentage points compared to JPEG at bit rates under 0.25 bpp. The project is available at \href{https://github.com/pashtari/lrf}{https://github.com/pashtari/lrf}.
\end{abstract}

\begin{keyword}
    Matrix Factorization, Low-rank Approximation, Quantization, Image Compression.
\end{keyword}
\end{frontmatter}

\section{Introduction} \label{sec:introduction}

Lossy image compression involves reducing the storage size of digital images by discarding some image data that are redundant or less perceptible to the human eye. This is crucial for efficiently storing and transmitting images, particularly in applications where bandwidth or storage resources are limited, such as web browsing, streaming, and mobile platforms. Lossy image compression methods enable adjusting the degree of compression, providing a selectable tradeoff between storage size and image quality. Widely used methods such as JPEG \cite{wallace1991jpeg} and JPEG 2000 \cite{skodras2001jpeg} follow the \emph{transform coding} paradigm \cite{goyal2001theoretical}. They use orthogonal linear transformations, such as discrete cosine transform (DCT) \cite{ahmed1974discrete} and discrete wavelet transform (DWT) \cite{antonini1992image}, to decorrelate small image blocks. Since these transforms map image data into a continuous domain, quantization is necessary before coding into bytes. Unfortunately, as quantization errors can significantly degrade compression performance, the quantizers must be carefully crafted to minimize this impact, which further complicates codec design.

Another promising paradigm relies on low-rank approximation techniques, with singular value decomposition (SVD) being a notable example. SVD is recognized as the deterministically optimal transform for energy compaction \cite{andrews1976singular}. In practice, current SVD-based methods \cite{andrews1976singular, prasantha2007image, hou2015sparse} can represent image data only with factors that contain floating-point elements, necessitating a quantization step prior to any byte-level processing. The quantization step often introduces errors, which result in suboptimal compression performance.

Motivated by this, we introduce quantization-aware matrix factorization (QMF) and, based on it, develop an effective lossy image compression method. Unlike traditional compression methods, the proposed approach integrates quantization into the optimization process rather than treating it as a separate step before byte-level processing. Our QMF formulation provides a low-rank representation of image data as the product of two smaller factor matrices. The quantization is integrated via introducing constraints in the optimization process, where the elements of the factor matrices are constrained to \emph{bounded integer} values. These elements, with discrete values represented as bounded integers, can be directly stored using standard integral data types---such as \texttt{int8} and \texttt{int16} supported by programming languages---and losslessly processed, making QMF arguably better suited than SVD for image compression. Another advantage of QMF is that the reshaped factor matrices can be treated as 8-bit grayscale images, allowing any lossless image compression standard to be seamlessly integrated into the proposed framework. We propose an efficient iterative algorithm for QMF using a block coordinate descent (BCD) scheme, where each column of a factor matrix is taken as a block and updated one at a time using a closed-form solution.

Our contributions are summarized as follows. We propose a novel optimization framework that enables the integration of quantization and low-rank approximation for image compression. Moreover, we introduce an efficient algorithm for the QMF problem and prove its convergence. Finally, to the best of our knowledge, this work is the first effort to explore QMF for image compression, presenting the first algorithm based on a low-rank approach that significantly outperforms SVD and competes favorably with JPEG, particularly at low bit rates. Our method narrows the gap between factorization and quantization by integrating them into a single layer and optimizing the compression system.
\section{Related Work} \label{sec:related_work}

\paragraph{Transform Coding}
Transform coding is a widely used approach in lossy image compression, leveraging mathematical transforms to decorrelate pixel values and represent image data more compactly. One of the earliest and most influential methods is the discrete cosine transform (DCT) \cite{ahmed1974discrete}, used in JPEG \cite{wallace1991jpeg}, which converts image data into the frequency domain, prioritizing lower frequencies to retain perceptually significant information. The discrete wavelet transform (DWT) \cite{antonini1992image}, used in JPEG 2000 \cite{skodras2001jpeg}, offers improved performance by capturing both frequency and location information, leading to better handling of edges and textures \cite{shapiro1993embedded}. More recently, the WebP \cite{google2011webp} and HEIF \cite{lainema2016hevc, hannuksela2015high} formats combine DCT and intra-frame prediction to achieve superior compression and quality compared to JPEG.

\paragraph{Learned Image Compression (LIC)}
Recently, learned image compression (LIC) has gained attention for potentially outperforming traditional methods by leveraging deep neural networks. \citet{balle2018variational} pioneered this area with an end-to-end trainable convolutional neural network based on variational autoencoders.\remove{Mentzer et al. (2020) ("High-fidelity generative image compression") integrated generative adversarial networks into LIC, developing an effective generative lossy compression system.} \citet{cheng2020learned} incorporated a simplified attention module and discretized Gaussian mixture likelihoods for achieving a more accurate and flexible entropy model. \citet{liu2023learned} combined transformers and CNNs to exploit the local modeling ability of convolutions and the global modeling ability of the attention mechanism. \citet{yang2024lossy} introduced diffusion models into LIC, using a denoising decoder to iteratively reconstruct a compressed image. Despite these advancements, the high computational complexity of LIC methods remains a significant limitation, particularly for real-time applications and resource-constrained environments.

\paragraph{Low-rank Techniques}
Low-rank approximation can provide a compact representation by decomposing image data into smaller components. Notably, truncated singular value decomposition (tSVD) is a classical technique that decomposes images into singular values and vectors, retaining only the most significant components to achieve compression \cite{andrews1976singular, prasantha2007image}. \citet{hou2015sparse} proposed sparse low-rank matrix approximation (SLRMA) for data compression, which is able to explore both the intra- and inter-coherence of data samples simultaneously from the perspective of optimization and transformation.
More recently, \citet{yuan2020image} introduced a graph-based low-rank regularization to reduce compression artifacts near block boundaries at low bit rates.

\paragraph{Integer Matrix Factorization}
There are applications where meaningful representation of data as discrete factor matrices is crucial. While typical low-rank techniques like SVD and nonnegative matrix factorization (NMF) are inappropriate for such applications, integer matrix factorization (IMF) ensures the integrality of factors to achieve this goal. \citet{lin2005integer} investigates IMF to effectively handle discrete data matrices for cluster analysis and pattern discovery. \citet{dong2018integer} introduce an alternative least squares method for IMF, verifying its effectiveness with some data mining applications. However, the application of IMF in image compression remains unexplored.

While existing IMF methods generally constrain factor elements to the entire set of integer values, we propose quantized matrix factorization (QMF), which minimizes the objective function over a bounded interval of integers, thereby modeling a uniformly quantized domain. Furthermore, we introduce a block coordinate descent (BCD)-based algorithm to solve the QMF problem, which is both computationally efficient and provably convergent. This work investigates the potential of QMF for image compression, arguing that it can serve as a powerful tool for this purpose.
\section{Method} \label{sec:method}

\subsection{Overall Encoding Framework} \label{sec:overall_framework}

The proposed compression method follows a \emph{transform coding} paradigm, but it does not involve a separate quantization step. Figure \ref{fig:qmf_encoder} illustrates an overview of our encoding pipeline based on quantization-aware matrix factorization (QMF). The encoder accepts an RGB image with dimensions $H \times W$ and a color depth of 8 bits, represented by the tensor $\bm{\mathcal{X}} \in \{0, \ldots, 255\}^{3 \times H \times W}$. Each step of encoding is described in the following.

\begin{figure*}[!t]
    \centering
    \includegraphics[width=\textwidth]{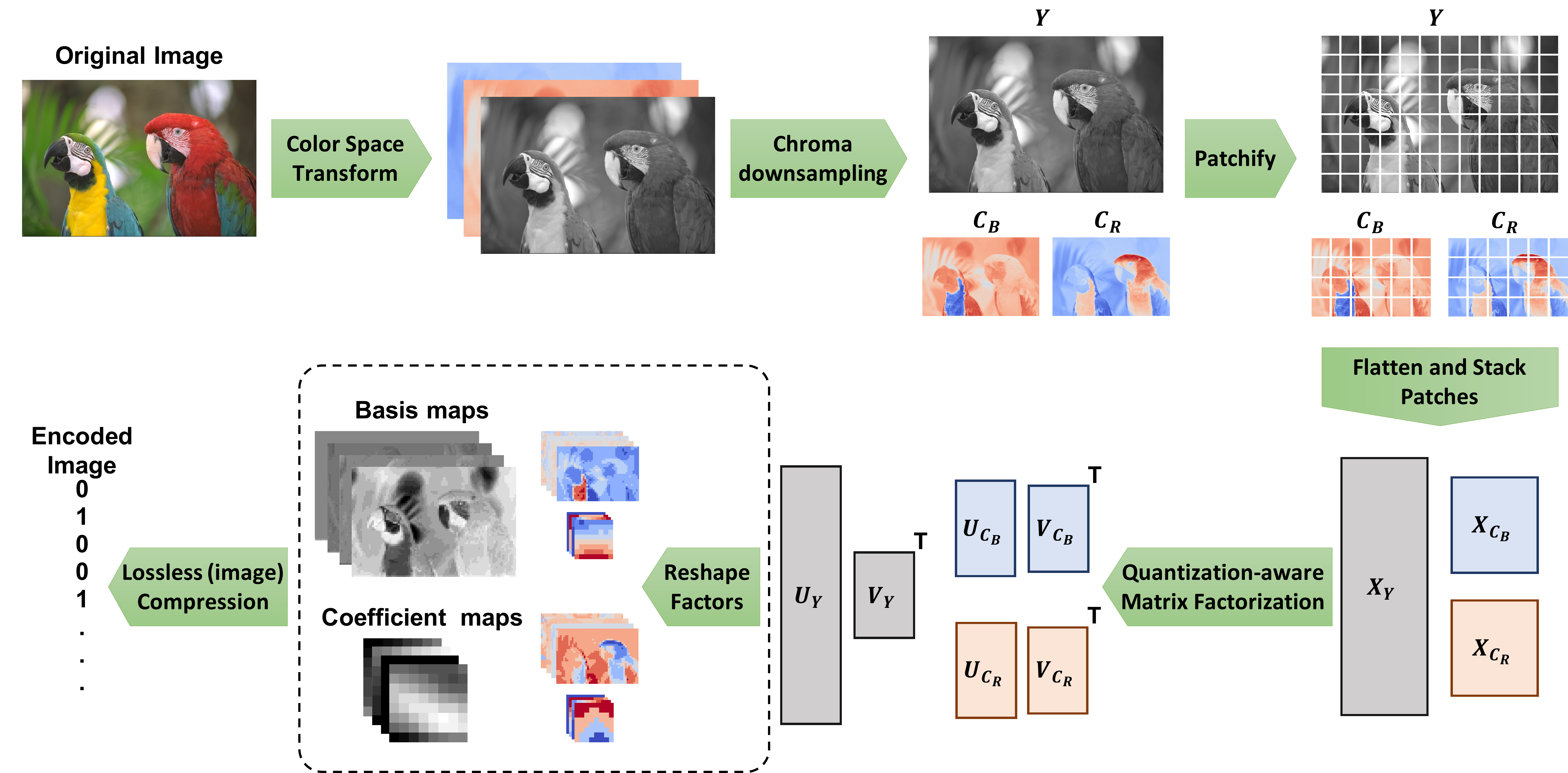}
    \caption{An illustration of the encoder for our image compression method.}
    \label{fig:qmf_encoder}
\end{figure*}

\paragraph{Color Space Transformation}
Analogous to the JPEG standard, the image is initially transformed into the YC\textsubscript{B}C\textsubscript{R} color space. Let $\bm{Y} \in [0, 255]^{H \times W}$ represent the \emph{luma} component, and $\bm{C}_B, \bm{C}_R \in [0, 255]^{\frac{H}{2} \times \frac{W}{2}}$ represent the blue-difference and red-difference \emph{chroma} components, respectively. Note that as a result of this transformation, the elements of the \emph{luma} ($\bm{Y}$) and \emph{chroma} ($\bm{C}_B$, $\bm{C}_R$) matrices are not limited to integers and can take any value within the interval $[0, 255]$.

\paragraph{Chroma Downsampling}
After conversion to the YC\textsubscript{B}C\textsubscript{R} color space, the \emph{chroma} components $\bm{C}_B$ and $\bm{C}_R$ are downsampled using average-pooling with a kernel size of $(2, 2)$ and a stride of $(2, 2)$, similar to the process used in JPEG. This downsampling exploits the fact that the human visual system perceives far more detail in brightness information (\emph{luma}) than in color saturation (\emph{chroma}).

\paragraph{Patchification}
After \emph{chroma} downsampling, we have three components:  the \emph{luma} component $\bm{Y} \in [0, 255]^{H \times W}$ and the \emph{chroma} components $\bm{C}_B, \bm{C}_R \in [0, 255]^{\frac{H}{2} \times \frac{W}{2}}$. Each of the matrices is split into non-overlapping $8 \times 8$ patches. If a dimension of a matrix is not divisible by 8, the matrix is first padded to the nearest size divisible by 8 using reflection of the boundary values. These patches are then flattened into row vectors and stacked vertically to form matrices $\bm{X}_{Y} \in [0, 255]^{\frac{HW}{64} \times 64}$, $\bm{X}_{C_B} \in [0, 255]^{\frac{HW}{256} \times 64}$, and $\bm{X}_{C_R} \in [0, 255]^{\frac{HW}{256} \times 64}$. Later, these matrices will be low-rank approximated using QMF. Note that this patchification technique differs from the block splitting in JPEG, where each block is subject to DCT individually and processed independently. This patchification technique not only captures the locality and spatial dependencies of neighboring pixels but also performs better when combined with the matrix decomposition approach for image compression.

\paragraph{Low-rank Approximation}
We now apply a low-rank approximation to the matrices $\bm{X}_{Y}$, $\bm{X}_{C_B}$, and $\bm{X}_{C_R}$, which is the core of our compression method that provides a lossy compressed representation of these matrices.  The low-rank approximation \cite{eckart1936approximation} aims to approximate a given matrix $ \bm{X} \in \mathbb{R}^{M \times N} $ by
\begin{equation} \label{eq:lra}
    \bm{X} \approx \bm{U} \bm{V}^\mathsf{T} = \sum_{r=1}^{R} U_{:r} {V_{:r}}^\mathsf{T},
\end{equation}
where $\bm{U} \in \mathbb{R}^{M \times R}$ and $\bm{V} \in \mathbb{R}^{N \times R}$ are \emph{factor matrices} (or simply \emph{factors}), $R \leq \min(M,N)$ represents the \emph{rank}, $U_{:r}$ and $V_{:r}$ represent the $r$-th columns of $\bm{U}$ and $\bm{V}$, respectively. We refer to $\bm{U}$ as the \emph{basis matrix} and $\bm{V}$ as the \emph{coefficient matrix}. By selecting a sufficiently small value for $R$, the factor matrices $\bm{U}$ and $\bm{V}$, with a combined total of $(M+N)R$ elements, offer a compact representation of the original matrix $\bm{X}$, which has $MN$ elements, capturing the most significant patterns in the image. Depending on the loss function used to measure the reconstruction error between $\bm{X}$ and the product $\bm{U} \bm{V}^\mathsf{T}$, as well as the constraints on the factor matrices $\bm{U}$ and $\bm{V}$, various formulations and variants have been proposed for different purposes \cite{lee2000algorithms, ding2008convex, lin2005integer}. In Section \ref{sec:qmf}, we introduce and elaborate on our variant, termed quantization-aware matrix factorization (QMF), and argue why it is well-suited and effective for image compression.

\paragraph{Lossless Compression}
QMF yields factor matrices $\bm{U}_{Y} \in \{0, \ldots, 255\}^{\frac{HW}{64} \times R}$ and $\bm{V}_{Y} \in \{~0,~~~~~ \ldots\\, 255\}^{64 \times R}$; $\bm{U}_{C_B} \in \{0, \ldots, 255\}^{\frac{HW}{256} \times R}$ and $\bm{V}_{C_B} \in \{0, \ldots, 255\}^{64 \times R}$; and $\bm{U}_{C_R} \in \{0, \ldots, 255\}^{\frac{HW}{256} \times R}$ and $\bm{V}_{C_R} \in \{0, \ldots, 255\}^{64 \times R}$ that correspond to $\bm{X}_{Y}$, $\bm{X}_{C_B}$, and $\bm{X}_{C_R}$. Since these matrices have elements constrained to integer values (allowing seamless integration of quantization with their optimization process), they can be directly encoded using any standard lossless data compression method, such as zlib \cite{deutsch1996zlib}. In contrast, other lossy image compression methods typically require a separate quantization step, introducing errors that cannot be incorporated or considered during the compression process.

Alternatively, we can first reshape the factor matrices by unfolding their first dimension to obtain $R$-channel 2D spatial maps, referred to as \emph{factor maps} and represented by the following tensors:
\begin{align} \label{eq:reshaped_factors}
    \bm{\mathcal{U}}_{Y}                                                 & \in \{0, \ldots, 255\}^{R \times \frac{H}{8} \times \frac{W}{8}}, \nonumber \\
    \bm{\mathcal{U}}_{C_B}, \bm{\mathcal{U}}_{C_R}                       & \in \{0, \ldots, 255\}^{R \times \frac{H}{16} \times \frac{W}{16}},         \\
    \bm{\mathcal{V}}_{Y}, \bm{\mathcal{V}}_{C_B}, \bm{\mathcal{V}}_{C_R} & \in \{0, \ldots, 255\}^{R \times 8 \times 8}. \nonumber
\end{align}
As each channel of a \emph{factor map} can be treated as an 8-bit grayscale image, we can encode it by any standard lossless image compression method such as PNG. For images with a resolution of $H, W \gg 64$, which are most common nowadays, the \emph{basis maps} ($\bm{\mathcal{U}}$) are significantly larger than the \emph{coefficient maps} ($\bm{\mathcal{V}}$), accounting for the majority of the storage space. Interestingly, in practice, the QMF \emph{basis maps} turn out to be meaningful images, each capturing some visual semantic of the image (see Figure \ref{fig:qmf_components} for an example). Therefore, our QMF approach can effectively leverage the power of existing lossless image compression algorithms, offering a significant advantage over current methods. However, in this work, we take the first approach and use the zlib library \cite{deutsch1996zlib} to encode factor matrices, creating a stand-alone codec that is independent from other image compression methods.

\begin{figure*}[!t]
    \centering
    \begin{minipage}{0.22\textwidth}
        \centering
        \begin{subfigure}{\textwidth}
            \centering
            \includegraphics[width=.80\textwidth]{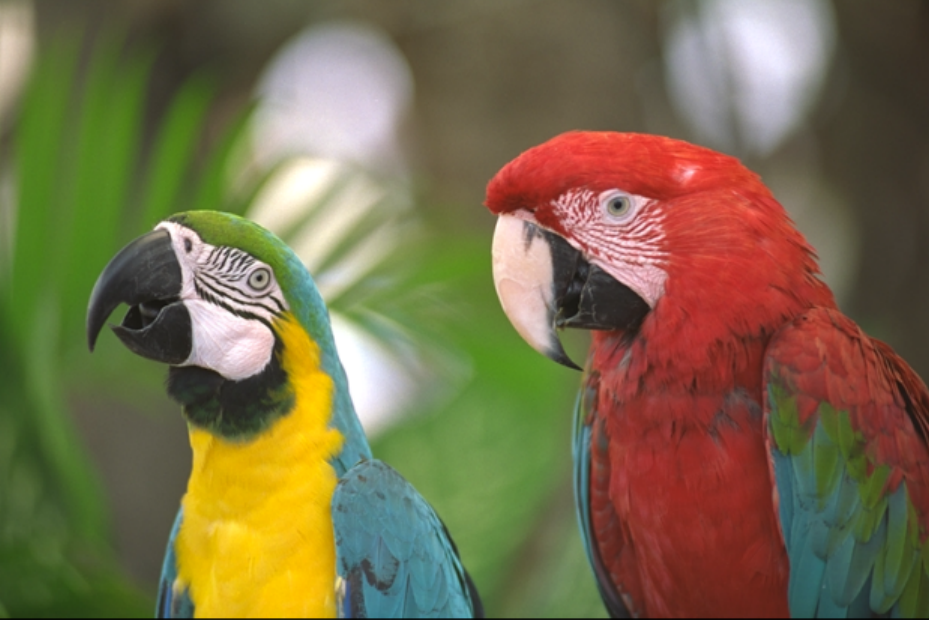}
            \caption{Original}
        \end{subfigure}
    \end{minipage}%
    \begin{minipage}{0.40\textwidth}
        \centering
        \begin{subfigure}{\textwidth}
            \centering
            \includegraphics[width=.95\textwidth]{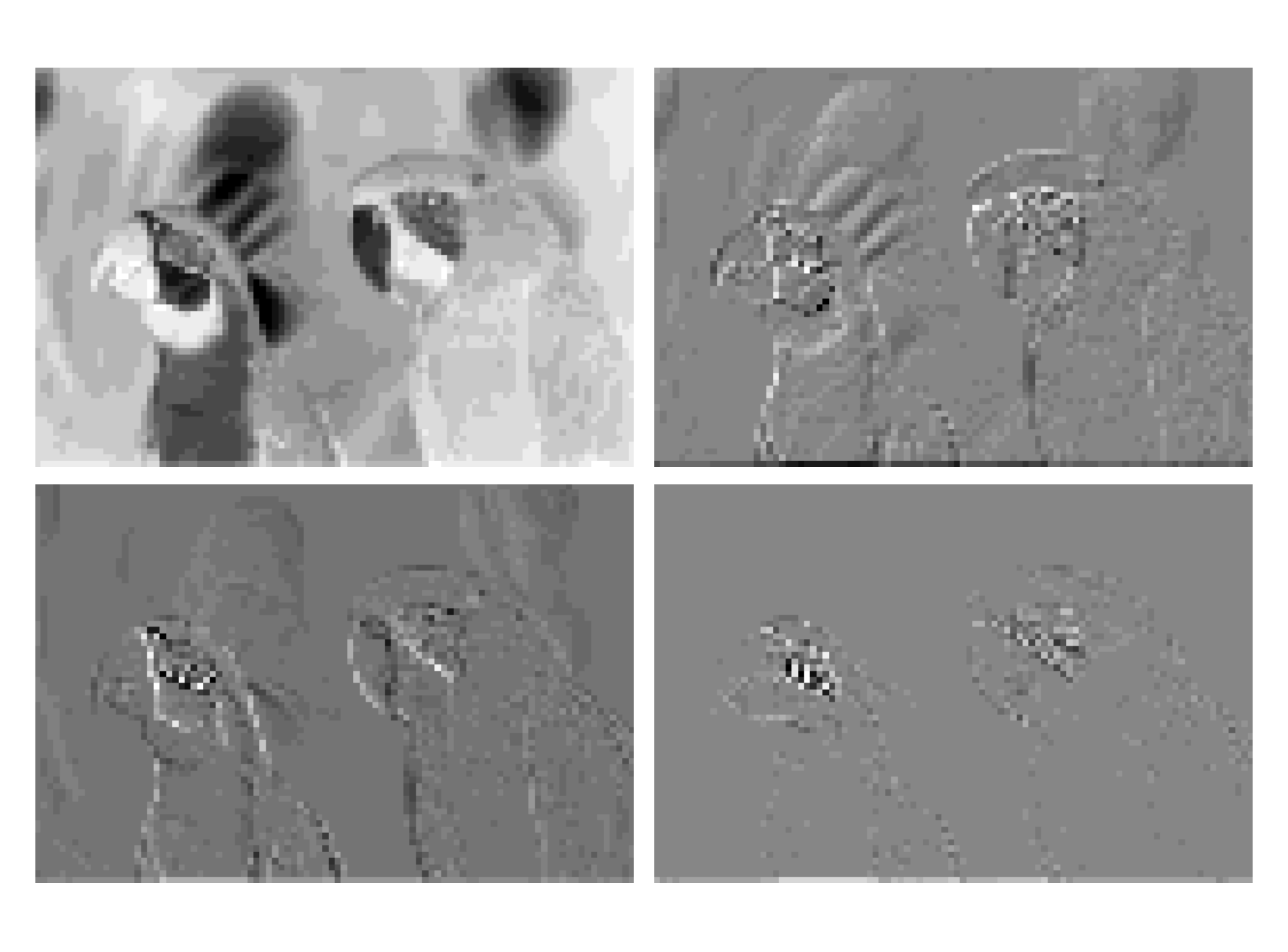}
            \vspace{-10pt}
            \caption{$\bm{\mathcal{U}}_Y$ channels}
        \end{subfigure}
    \end{minipage}%
    \begin{minipage}{0.32\textwidth}
        \centering
        \begin{subfigure}{\textwidth}
            \centering
            \includegraphics[width=.95\textwidth]{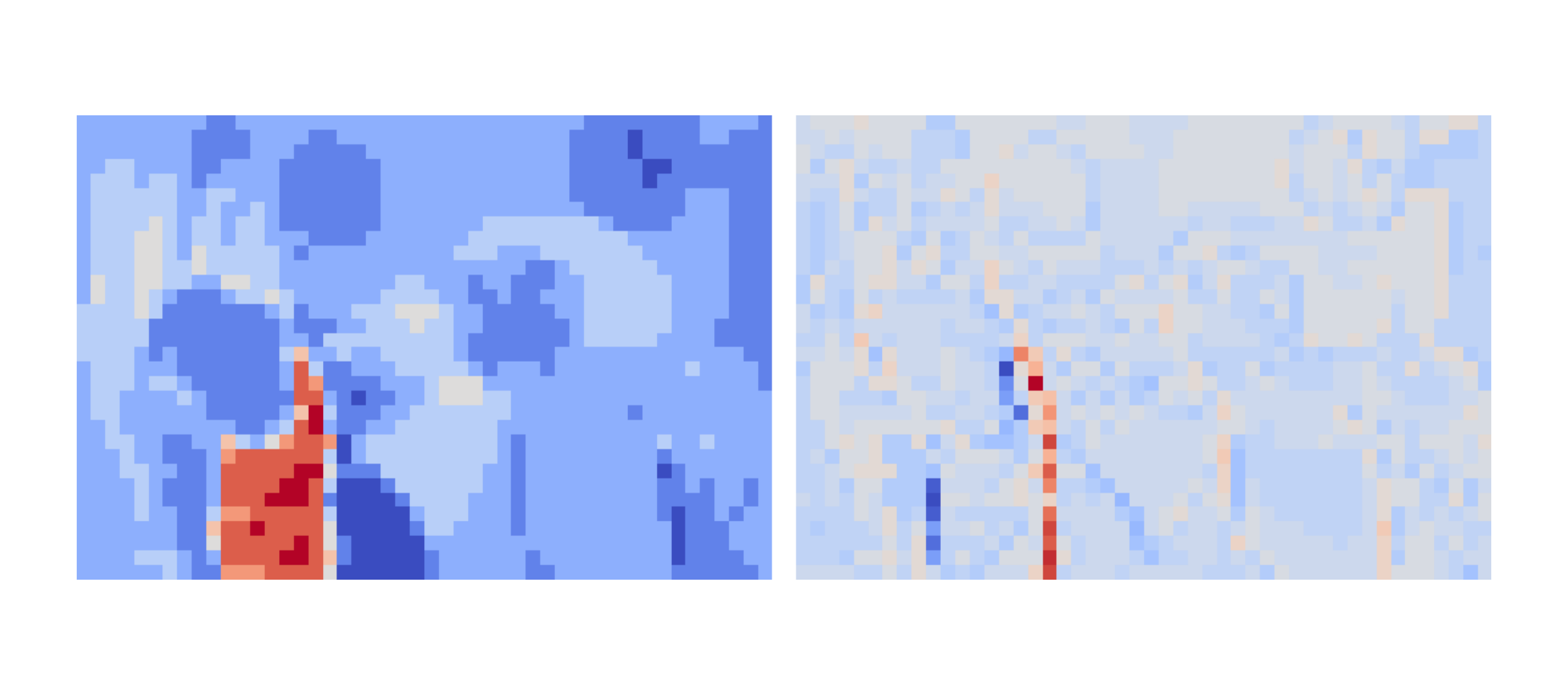}
            \vspace{-10pt}
            \caption{$\bm{\mathcal{U}}_{C_B}$ channels}
        \end{subfigure}
        \begin{subfigure}{\textwidth}
            \centering
            \includegraphics[width=.95\textwidth]{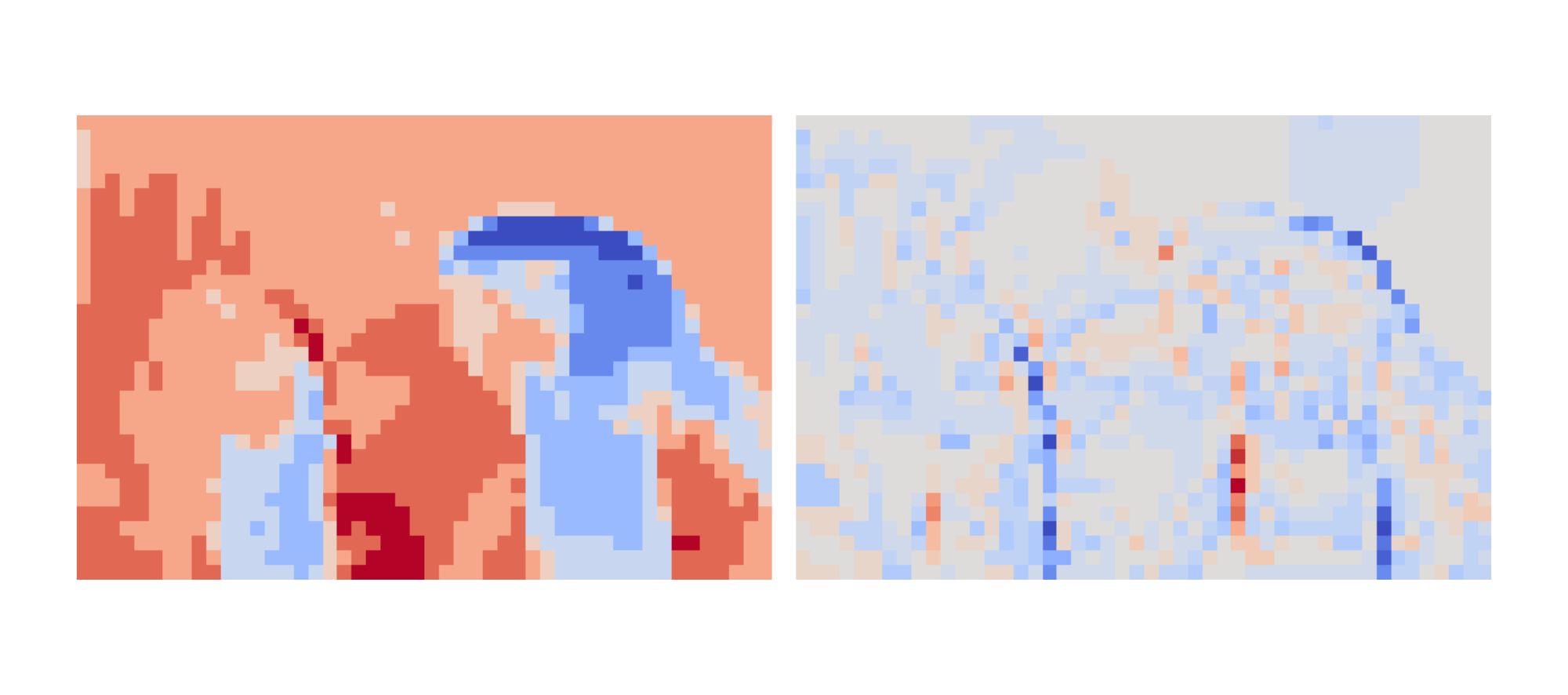}
            \vspace{-10pt}
            \caption{$\bm{\mathcal{U}}_{C_R}$ channels}
        \end{subfigure}
    \end{minipage}
    \caption{The channels of QMF basis maps for the \texttt{kodim23} image from Kodak. (a) shows the original image. The QMF basis maps corresponding to luma (b), blue-difference (c), and red-difference chroma (d) are shown. The channels of basis map with higher energy maintain the overall texture of the original image, where channels with lower energy focus more on subtle changes.}
    \label{fig:qmf_components}
\end{figure*}

\subsection{Decoding} \label{sec:decoding}

The decoder receives an encoded image and reconstructs the RGB image by applying the inverse of the operations used by the encoder, starting from the last layer and moving to the first. Initially, the factor matrices are produced by losslessly decompressing the encoded image. The matrices $\bm{X}_{Y}$, $\bm{X}_{C_B}$, and $\bm{X}_{C_R}$ are calculated through the product of the corresponding factor matrices, according to \eqref{eq:lra}. The \emph{luma} and downsampled \emph{chroma} components are then obtained by reshaping $\bm{X}_{Y}$, $\bm{X}_{C_B}$, and $\bm{X}_{C_R}$ back into their spatial forms, following the inverse of the patchification step. Subsequently, the downsampled \emph{chroma} components are upsampled to their original size using nearest-neighbor interpolation. Finally, the YC\textsubscript{B}C\textsubscript{R} image is converted back into an RGB image.

\subsection{Quantization-aware Matrix Factorization (QMF)} \label{sec:qmf}

The main building block of our method is quantization-aware matrix factorization (QMF), which is responsible for the lossy compression of matrices obtained through patchification. QMF can be framed as an optimization problem, aiming to minimize the reconstruction error between the original matrix $\bm{X} \in \mathbb{R}^{M \times N}$ and the product $\bm{U} \bm{V}^\mathsf{T}$, while ensuring, as an integrated quantization step, that the elements of the factor matrices $\bm{U}$ and $\bm{V}$ are integers within a specified interval $[\alpha,\beta]$ with integer endpoints, i.e., $\alpha,\beta \in \Z$. Formally, the QMF problem can be expressed as:
\begin{align} \label{eq:qmf_problem}
    \minimize_{\bm{U}, \bm{V}}  \quad & \ \  \| \bm{X} - \bm{U} \bm{V}^\mathsf{T} \|_\text{F}^2 \nonumber                                             \\
    \text{subject to} \quad                 & \ \ \bm{U} \in \mathbb{Z}_{[\alpha,\beta]}^{M \times R}, \bm{V} \in \mathbb{Z}_{[\alpha,\beta]}^{N \times R},
\end{align}
where $\|\cdot\|_\text{F}$ denotes the Frobenius norm; $R \leq \min(M,N)$ represents the \emph{rank}; and $\mathbb{Z}_{[\alpha,\beta]} \triangleq [\alpha,\beta] \cap \mathbb{Z}$ denotes the set of integers within $[\alpha,\beta]$. Without constraints on the factors, the problem would have an analytic solution through singular value decomposition (SVD), as addressed by the Eckart–Young–Mirsky theorem \cite{eckart1936approximation}. If only a nonnegativity constraint were applied (without integrality), variations of nonnegative matrix factorization (NMF) would emerge \cite{lee2000algorithms, gillis2020nonnegative}. The QMF problem \eqref{eq:qmf_problem} poses a challenging integer program, with finding its global minima known to be NP-hard \cite{dong2018integer, van1981another}. Only a few iterative algorithms \cite{dong2018integer, lin2005integer} have been proposed to find a ``good solution'' for some QMF variants in contexts other than image compression. In Section \ref{sec:bcd}, we propose an efficient iterative algorithm for the QMF problem \eqref{eq:qmf_problem}.

\begin{algorithm}[!t]
    \caption{The proposed block coordinate descent (BCD) algorithm for QMF. \label{alg:bcd_for_qmf}}
    \begin{algorithmic}[1]
        \REQUIRE $\bm{X} \in \mathbb{R}^{M \times N}$, factorization rank $R$, factor bounds $[\alpha,\beta]$, \# iterations $K$
        \ENSURE Factor matrices $\bm{U} \in \mathbb{Z}_{[\alpha,\beta]}^{M \times R}$ and $\bm{V} \in \mathbb{Z}_{[\alpha,\beta]}^{N \times R}$
        \STATE Initialize $\bm{U}^{\rm init}$, $\bm{V}^{\rm init}$ using the truncated SVD method, provided by \eqref{eq:initialization_u} and \eqref{eq:initialization_v}, and set $k=0$
        \WHILE{$k < K$}
        \STATE $k \gets k+1$
        \STATE $\bm{A} \gets \bm{X} \bm{V}^k$
        \STATE $\bm{B} \gets \bm{V}^{k^{\mathsf{T}}} \bm{V}^k$
        \FOR{$r = 1, \ldots, R$}
        \STATE \label{alg:step:u_update:1} $\displaystyle U_{:r}^{k+1/2} \gets \frac{A_{:r} - \sum_{s = 1}^{r-1} B_{sr} U_{:s}^{k+1} - \sum_{s=r+1}^M B_{sr} U_{:s}^k}{\| V_{:r}^k \|^2}$
        \vspace{2pt}
        \STATE \label{alg:step:u_update:2} $\displaystyle U_{:r}^{k+1} \gets \clamp_{[\alpha,\beta]}(\round( U_{:r}^{k+1/2}))$
        \ENDFOR
        \STATE $\bm{A} \gets \bm{X}^\mathsf{T} \bm{U}^{k+1}$
        \STATE $\bm{B} \gets \bm{U}^{{k+1}^\mathsf{T}} \bm{U}^{k+1}$
        \FOR{$r = 1, \ldots, R$}
        \STATE \label{alg:step:v_update:1} $\displaystyle V_{:r}^{k+1/2} \gets \frac{A_{:r} - \sum_{s = 1}^{r-1} B_{sr} V_{:s}^{k+1} - \sum_{s = r+1}^N B_{sr} V_{:s}^k}{\| U_{:r}^{k+1} \|^2}$
        \vspace{2pt}
        \STATE \label{alg:step:v_update:2} $\displaystyle V_{:r}^{k+1} \gets \clamp_{[\alpha,\beta]}(\round(V_{:r}^{k+1/2}))$
        \ENDFOR
        \ENDWHILE
        \RETURN $(\bm{U}^K, \bm{V}^K)$
    \end{algorithmic}
\end{algorithm}

The existing lossy image compression methods based on SVD and NMF approach the problem as an optimization task, followed by a separate quantization step. The optimization focuses on finding factors that minimize the reconstruction error. Before any byte-level processing, a quantization step is applied to project the floating-point elements of the resulting factors onto a set of discrete values. However, because the quantization is performed separately from the optimization, the quantization errors cannot be incorporated into the compression process. This separation leads to suboptimal compression performance (as demonstrated in Section \ref{sec:experiments}) and additional complications in designing quantizers.
In contrast, our QMF formulation, through the \emph{constrained} optimization in \eqref{eq:qmf_problem}, produces integer factor matrices that minimize the reconstruction error while ensuring that all elements are discrete. These integer factor matrices can be directly stored and processed losslessly without introducing roundoff errors. The reason for limiting the feasible region to $[\alpha,\beta]$ in our QMF formulation is to enable more compact storage of the factors using standard integral data types, such as \texttt{int8} and \texttt{int16}, supported by programming languages. Given that the elements of the input matrix $\bm{X}$ are in $[0, 255]$, we found the signed \texttt{int8} type, which represents integers from -128 to 127, suitable for image compression applications. As a result, our QMF formulation is well-suited for image compression, effectively integrating the factorization and quantization steps into a single, efficient compression process.

\subsection{Block Coordinate Descent Scheme for QMF} \label{sec:bcd}

We propose an efficient algorithm for QMF using the block coordinate descent (BCD) scheme (aka alternating optimization).
The pseudocode is provided in Algorithm \ref{alg:bcd_for_qmf}.
Starting with some initial parameter values, this approach involves sequentially minimizing the cost function with respect to a single column of a factor at a time, while keeping the other columns of that factor and the entire other factor fixed. This process is repeated until a stopping criterion is met, such as when the change in the cost function value falls below a predefined threshold or the maximum number of iterations is reached. Formally, this involves solving one of the following subproblems at a time:
\begin{align}
    \bm{u}_r \gets \underset{\bm{u}_r \in \mathbb{Z}_{[\alpha,\beta]}^{M \times 1}}{\argmin} \ \lVert \bm{E}_r - \bm{u}_r \bm{v}_r^\mathsf{T} \rVert_\text{F}^2, \label{eq:bcd_subproblem_u} \\
    \bm{v}_r \gets \underset{\bm{v}_r \in \mathbb{Z}_{[\alpha,\beta]}^{N \times 1}}{\argmin} \ \lVert \bm{E}_r - \bm{u}_r \bm{v}_r^\mathsf{T} \rVert_\text{F}^2, \label{eq:bcd_subproblem_v}
\end{align}
where $\bm{u}_r \triangleq U_{:r}$ and $\bm{v}_r \triangleq V_{:r}$ represent the $r$-th columns of $\bm{U}$ and $\bm{V}$, respectively. $\bm{E}_r \triangleq \bm{X} - \sum_{s \neq r}^{R} \bm{u}_s \bm{v}_s^\mathsf{T}$ is the residual matrix. We define one iteration of BCD as a complete cycle of updates across all the columns of both factors. In fact, the proposed algorithm is a $2R$-block coordinate descent procedure, where at each iteration, first the columns of $\bm{U}$ and then the columns of $\bm{V}$ are updated (see Algorithm \ref{alg:bcd_for_qmf}). Note that subproblem \eqref{eq:bcd_subproblem_v} can be transformed into the same form as \eqref{eq:bcd_subproblem_u} by simply transposing its error term inside the Frobenius norm. Therefore, we only need to find the best rank-1 approximation with integer elements constrained within a specific interval. Fortunately, this problem has a closed-form solution, as addressed by Theorem \ref{the:bcd_subproblem} below.

\begin{theorem}[Monotonicity] \label{the:bcd_subproblem}
    The global optima of subproblems \eqref{eq:bcd_subproblem_u} and \eqref{eq:bcd_subproblem_v} can be represented by closed-form solutions as follows:
    \begin{align}
        \textstyle \bm{u}_r \gets \clamp_{[\alpha,\beta]}\Big(\round\Big(\frac{\bm{E}_r \bm{v}_r}{\lVert \bm{v}_r \rVert^2}\Big)\Big), \label{eq:bcd_closed_form_subproblem_u} \\
        \textstyle \bm{v}_r \gets \clamp_{[\alpha,\beta]}\Big(\round\Big(\frac{\bm{E}_r^\mathsf{T} \bm{u}_r}{\lVert \bm{u}_r \rVert^2}\Big)\Big),
        \label{eq:bcd_closed_form_subproblem_v}
    \end{align}
    where $\round(\bm{Z})$ denotes an element-wise operator that rounds each element of $\bm{Z}$ to the nearest integer, and $\clamp_{[\alpha,\beta]}(\bm{Z}) \triangleq \max(\alpha, \min(\bm{Z}, \beta))$ denotes an element-wise operator that clamps each element of $\bm{Z}$ to the interval $[\alpha,\beta]$. Moreover, the cost function in \eqref{eq:qmf_problem} is monotonically nonincreasing over BCD iterations of Algorithm \ref{alg:bcd_for_qmf}, which involve sequential updates of \eqref{eq:bcd_closed_form_subproblem_u} and \eqref{eq:bcd_closed_form_subproblem_v} over columns of $\bm U$ and $\bm V$.
\end{theorem}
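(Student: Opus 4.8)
The plan is to prove Theorem~\ref{the:bcd_subproblem} in two parts: first the closed-form characterization of the subproblem optima, and then the monotonicity of the cost over BCD iterations, which will follow almost immediately from the first part together with the structure of the algorithm.

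\textbf{Part 1: closed-form solution of the rank-one subproblem.} I would start from subproblem~\eqref{eq:bcd_subproblem_u}, since \eqref{eq:bcd_subproblem_v} is identical after transposition. Expanding the Frobenius norm,
\begin{equation*}
  \lVert \bm{E}_r - \bm{u}_r \bm{v}_r^\mathsf{T} \rVert_\text{F}^2
  = \lVert \bm{E}_r \rVert_\text{F}^2 - 2\,\langle \bm{E}_r,\, \bm{u}_r \bm{v}_r^\mathsf{T}\rangle + \lVert \bm{u}_r\rVert^2 \lVert \bm{v}_r\rVert^2 ,
\end{equation*}
and using $\langle \bm{E}_r, \bm{u}_r \bm{v}_r^\mathsf{T}\rangle = \bm{u}_r^\mathsf{T}\bm{E}_r\bm{v}_r$, the objective becomes a separable sum over the entries of $\bm{u}_r$: writing $\bm{w} \triangleq \bm{E}_r\bm{v}_r$ and $c \triangleq \lVert\bm{v}_r\rVert^2$ (assumed positive; the degenerate case $\bm{v}_r = 0$ makes the objective constant in $\bm{u}_r$ and any feasible point, e.g.\ $0$, is optimal), the problem is $\min_{\bm{u}_r} \sum_i \big( c\, u_{ri}^2 - 2 w_i u_{ri}\big)$ up to the additive constant $\lVert\bm{E}_r\rVert_\text{F}^2$. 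This decouples into $M$ independent scalar problems $\min_{t \in \mathbb{Z}_{[\alpha,\beta]}} c\,(t - w_i/c)^2$, i.e.\ finding the integer in $[\alpha,\beta]$ closest to the unconstrained minimizer $w_i/c = (\bm{E}_r\bm{v}_r)_i / \lVert\bm{v}_r\rVert^2$. The key elementary fact is that for a one-dimensional convex quadratic, the minimizer over $\mathbb{Z}_{[\alpha,\beta]}$ is obtained by rounding the real minimizer to the nearest integer and then clamping to $[\alpha,\beta]$ — rounding is justified because $t \mapsto (t - x)^2$ is unimodal and symmetric about $x$, so among integers the nearest one wins; clamping is justified because on an interval a convex function restricted to $\{t \le \beta\}$ (resp.\ $\{t \ge \alpha\}$) is minimized at the endpoint when the unconstrained optimum lies outside. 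Stacking the $M$ scalar solutions yields exactly \eqref{eq:bcd_closed_form_subproblem_u}, and the transposed argument gives \eqref{eq:bcd_closed_form_subproblem_v}.

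\textbf{Part 2: monotonicity.} Each line~\ref{alg:step:u_update:1}--\ref{alg:step:u_update:2} (and \ref{alg:step:v_update:1}--\ref{alg:step:v_update:2}) of Algorithm~\ref{alg:bcd_for_qmf} is exactly an instance of the update \eqref{eq:bcd_closed_form_subproblem_u} (resp.\ \eqref{eq:bcd_closed_form_subproblem_v}): one needs to check that the quantity $\frac{A_{:r} - \sum_{s<r} B_{sr} U_{:s}^{k+1} - \sum_{s>r} B_{sr} U_{:s}^{k}}{\lVert V_{:r}^k\rVert^2}$ appearing in the algorithm equals $\frac{\bm{E}_r \bm{v}_r}{\lVert\bm{v}_r\rVert^2}$ with $\bm{E}_r$ formed from the most recent values of all other columns. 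This is a direct bookkeeping computation: $\bm{A} = \bm{X}\bm{V}^k$ so $A_{:r} = \bm{X}\bm{v}_r$, $\bm{B} = \bm{V}^{k^\mathsf{T}}\bm{V}^k$ so $B_{sr} = \bm{v}_s^\mathsf{T}\bm{v}_r$, and hence $A_{:r} - \sum_{s \neq r} B_{sr}\,(\text{current }U_{:s}) = \big(\bm{X} - \sum_{s\neq r} U_{:s} \bm{v}_s^\mathsf{T}\big)\bm{v}_r = \bm{E}_r\bm{v}_r$. Since by Part~1 this update returns a \emph{global} minimizer of the subproblem over the feasible block, the full objective $\lVert\bm{X} - \bm{U}\bm{V}^\mathsf{T}\rVert_\text{F}^2$ cannot increase at that step (the previous iterate is feasible for the subproblem, so the new value is $\le$ the old one). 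Chaining this inequality over all $2R$ block updates within one iteration, and then over iterations, gives the claimed monotone nonincreasing behavior.

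\textbf{Anticipated main obstacle.} The genuinely mathematical content is the ``round-then-clamp is optimal'' lemma for a scalar convex quadratic over $\mathbb{Z}_{[\alpha,\beta]}$; it is elementary but deserves a careful statement, including the tie-breaking convention of $\round$ at half-integers (any consistent rule works, since both nearest integers then give equal objective) and the boundary cases where $w_i/c$ lies outside $[\alpha,\beta]$. Everything else — the separability after expanding the Frobenius norm, the identification of the algorithm's update with \eqref{eq:bcd_closed_form_subproblem_u}, and the telescoping of the monotonicity inequality — is routine. A minor point worth noting explicitly is that monotonicity here does not by itself yield convergence of the iterates (the feasible set is finite but the objective may plateau), so the theorem claims only the monotone decrease of the cost, which is all that Part~2 establishes.
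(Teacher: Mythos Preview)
Your proposal is correct and follows essentially the same approach as the paper's proof: decouple the Frobenius-norm objective into $M$ independent scalar convex quadratics, argue that the minimizer over $\mathbb{Z}_{[\alpha,\beta]}$ of each is obtained by rounding the unconstrained minimizer and clamping, and then deduce monotonicity from the fact that each block update attains the global optimum of its subproblem. Your write-up is in fact slightly more thorough than the paper's in that you explicitly verify the bookkeeping identity between the algorithm's update and $\bm{E}_r\bm{v}_r/\lVert\bm{v}_r\rVert^2$, and you flag the degenerate case $\bm{v}_r=0$ and the tie-breaking convention, none of which the paper spells out.
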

\begin{proof}
    See \ref{app:monotonicity_proof} for the proof.
\end{proof}

It is noteworthy that the combination of $\round(\cdot)$ and $\clamp_{[\alpha,\beta]}(\cdot)$ in \eqref{eq:bcd_closed_form_subproblem_u} and \eqref{eq:bcd_closed_form_subproblem_v} can be interpreted as the element-wise projector to $\mathbb{Z}_{[\alpha,\beta]}$. In addition, updates \eqref{eq:bcd_closed_form_subproblem_u} and \eqref{eq:bcd_closed_form_subproblem_v} are presented in Algorithm \ref{alg:bcd_for_qmf} at steps \ref{alg:step:u_update:1} and \ref{alg:step:u_update:2}, and steps \ref{alg:step:v_update:1} and \ref{alg:step:v_update:2}, respectively. In Theorem \ref{thm:convergence}, the convergence of Algorithm \ref{alg:bcd_for_qmf} employing these closed-form solutions is established.

\begin{theorem}[Convergence]\label{thm:convergence}
    Let $\seq{\displaystyle U_{:r}^{k}}[k\in\N]$ and $\seq{\displaystyle V_{:r}^{k}}[k\in\N]$ for $r\in \{1,\dots,R\}$ be sequences generated by the proposed Algorithm \ref{alg:bcd_for_qmf}. Then all sequences are bounded and convergent to a locally optimal point of optimization problem \eqref{eq:qmf_problem}.
\end{theorem}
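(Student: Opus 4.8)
The plan is to combine the monotonicity established in Theorem~\ref{the:bcd_subproblem} with the finiteness of the feasible set of \eqref{eq:qmf_problem}; the argument then follows the familiar Gauss--Seidel / block-coordinate template, specialized to a discrete domain. Every iterate produced by Algorithm~\ref{alg:bcd_for_qmf} lies in $\mathbb{Z}_{[\alpha,\beta]}^{M\times R}$ or $\mathbb{Z}_{[\alpha,\beta]}^{N\times R}$, which are \emph{finite} sets; hence the sequences $(U_{:r}^{k})_{k\in\N}$ and $(V_{:r}^{k})_{k\in\N}$ are trivially bounded, and the cost $f(\bm U,\bm V)=\lVert\bm X-\bm U\bm V^\mathsf{T}\rVert_\mathrm{F}^2$ attains only finitely many values along the iterations. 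By Theorem~\ref{the:bcd_subproblem}, $f$ is nonincreasing over the inner and outer updates of Algorithm~\ref{alg:bcd_for_qmf}, and it is bounded below by $0$; a nonincreasing real sequence taking finitely many values is eventually constant, so there are $K_0\in\N$ and $f^\star\ge 0$ with $f(\bm U^{k},\bm V^{k})=f^\star$ for all $k\ge K_0$. Since one outer iteration is a composition of column updates each of which is nonincreasing, for $k\ge K_0$ every individual column update must in fact leave $f$ unchanged.

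\textbf{From a frozen cost to frozen iterates.} Fix $k\ge K_0$ and consider the update of a column $U_{:r}$ at steps~\ref{alg:step:u_update:1}--\ref{alg:step:u_update:2}. Completing the square, subproblem~\eqref{eq:bcd_subproblem_u} is equivalent to $\min_{\bm u_r\in\mathbb{Z}_{[\alpha,\beta]}^{M}}\ \lVert\bm v_r\rVert^2\,\lVert\bm u_r-\bm c_r\rVert^2$ with $\bm c_r=\bm E_r\bm v_r/\lVert\bm v_r\rVert^2$, and this objective decouples over the entries of $\bm u_r$. Provided $\lVert\bm v_r\rVert>0$ and no entry of $\bm c_r$ lies exactly halfway between two integers, the subproblem has the \emph{unique} minimizer $\clamp_{[\alpha,\beta]}(\round(\bm c_r))$, i.e.\ precisely the vector computed by the algorithm. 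Because the update does not decrease $f$ for $k\ge K_0$, the incumbent $U_{:r}^{k}$ is itself a minimizer of the subproblem, so by uniqueness it coincides with the computed update: $U_{:r}^{k+1}=U_{:r}^{k}$. The identical reasoning applies to every column of $\bm U$ and $\bm V$, hence $(\bm U^{k},\bm V^{k})$ is constant for $k\ge K_0$, the sequences converge, and the common limit $(\bm U^\star,\bm V^\star)$ solves every columnwise subproblem exactly. Thus $(\bm U^\star,\bm V^\star)$ is a coordinatewise (column-block) minimum of \eqref{eq:qmf_problem}; in particular no feasible change of a single column -- and hence no feasible $\pm1$ move of a single entry -- can lower the cost, which is the local optimality asserted.

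\textbf{Main obstacle.} The delicate point is the uniqueness claim in the second paragraph: the closed-form update is guaranteed to return the incumbent only when the rank-one subproblem has a single minimizer. Uniqueness can fail on a negligible set of inputs -- when some entry of $\bm E_r\bm v_r/\lVert\bm v_r\rVert^2$ is a half-integer strictly inside $[\alpha,\beta]$, so $\round$ must break a tie -- or when a factor column vanishes and the denominator degenerates. I would dispatch these either through a mild nondegeneracy hypothesis on $\bm X$ (and on the truncated-SVD initialization), or by showing that the fixed, data-independent tie-breaking convention built into $\round$ still forces a fixed point after one additional sweep; ruling out genuine cycles of period two or more among iterates of equal cost is the part of the argument that demands the most care.
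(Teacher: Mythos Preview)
Your approach is genuinely different from the paper's. The paper does not exploit the finiteness of $\mathbb{Z}_{[\alpha,\beta]}$ at all; instead it recasts \eqref{eq:qmf_problem} as the unconstrained minimization of $\Psi(\bm U,\bm V)=f_0(\bm U,\bm V)+\sum_r\delta_{\mathbb{Z}_{[\alpha,\beta]}}(U_{:r})+\sum_r\delta_{\mathbb{Z}_{[\alpha,\beta]}}(V_{:r})$ and identifies Algorithm~\ref{alg:bcd_for_qmf} with the PALM scheme of Bolte et al. It then verifies the PALM hypotheses (proper lsc indicators, blockwise Lipschitz gradients of $f_0$, bounded iterates, and the KL property via semi-algebraicity of $\Psi$), proves a lemma that $\prox^{\delta_{\mathbb{Z}_{[\alpha,\beta]}}}=\clamp_{[\alpha,\beta]}\circ\round$, and observes that for the quadratic block $f_0(\cdot,\bm V)$ one gradient step with the Lipschitz step size lands at the exact minimizer, so the closed forms \eqref{eq:bcd_closed_form_subproblem_u}--\eqref{eq:bcd_closed_form_subproblem_v} are PALM updates. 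Convergence to a critical point then comes from the PALM theorem.

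Your route is more elementary and self-contained: finite state space plus monotone cost immediately gives an eventually constant objective, and the structure of the rank-one subproblem pins down the iterates. What you gain is that no external machinery (KL theory, subdifferential calculus) is needed. What you lose is exactly the obstacle you flag: PALM's convergence proof is indifferent to nonuniqueness of the proximal map, so the paper never has to confront half-integer ties or vanishing columns, whereas your argument hinges on uniqueness to pass from frozen cost to frozen iterates. Your proposed fixes (genericity of $\bm X$, or leveraging the deterministic tie-breaking of $\round$ to rule out nontrivial cycles via the finite-state dynamical-system picture) are the right directions, but as you note, excluding period-$\geq 2$ cycles of equal cost is where the real work lies if you want a proof without side assumptions.
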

\begin{proof}
    See \ref{app:convergence_proof} for the proof.
\end{proof}



\paragraph*{Initialization}
The initial values of factors can significantly impact the convergence performance of the BCD algorithm. We found that the convergence with naive random initialization can be too slow. To address this issue, we propose an initialization method using SVD. The procedure is straightforward. First, the truncated SVD of the input matrix $\bm{X} \in \mathbb{R}^{M \times N}$ is computed as $\tilde{\bm{U}} \bm{\Sigma} \tilde{\bm{V}}^\mathsf{T}$, where $\bm{\Sigma} \in \mathbb{R}^{R \times R}$ is a diagonal matrix corresponding to the $R$ largest singular values. $\tilde{\bm{U}} \in \mathbb{R}^{M \times R}$ and $\tilde{\bm{V}} \in \mathbb{R}^{N \times R}$ contain the corresponding left-singular vectors and right-singular vectors in their columns, respectively. The initial factors are then calculated as follows:
\begin{align}
    \bm{U}^{\rm init} = \clamp_{[\alpha,\beta]}(\round(\tilde{\bm{U}} \bm{\Sigma}^\frac{1}{2})), \label{eq:initialization_u} \\
    \bm{V}^{\rm init} = \clamp_{[\alpha,\beta]}(\round(\bm{\Sigma}^\frac{1}{2} \tilde{\bm{V}})). \label{eq:initialization_v}
\end{align}
Essentially, this means that instead of performing a constrained optimization, we first low-rank approximate $\bm{X}$ and then satisfy the constraints by projecting the elements of the resulting factor matrices onto $\mathbb{Z}_{[\alpha,\beta]}$.

\section{Experiments} \label{sec:experiments}

\subsection{Setup} \label{sec:setup}

\paragraph{QMF Configuration}
In our QMF implementation, we used a default patch size of $8 \times 8$. The default factor bounds were set to $[-16, 15]$. Unless otherwise specified, the number of BCD iterations was set to 10 although our ablation studies in Section \ref{sec:ablation_studies} suggest that even 2 iterations may suffice in practice (see Figure \ref{fig:iters_ablation_psnr}). For lossless compression of factors, we encoded and decoded each column of a factor separately using the zlib library \cite{deutsch1996zlib}. We also tested other lossless compression methods, such as zstd \cite{zstandard} and Huffman coding \cite{huffman1952method}, which demonstrated a comparable performance. However, as zlib is well-established, simple, and offers fast performance in Python, the experimental results are reported for this compression method.

\paragraph{Baseline Codecs}
We compared our QMF method against JPEG and SVD baselines. For JPEG compression, we used the Pillow library \cite{clark2015pillow}. Our SVD baseline follows the same framework as the proposed method (described in \ref{sec:overall_framework}) but substitutes truncated SVD for QMF. This is followed by uniform quantization of the SVD factor matrices before lossless compression using zlib \cite{deutsch1996zlib}. This differs from QMF compression, which benefits from the integrality of factors by directly encoding them with zlib, eliminating the need for a separate quantization step.

\paragraph{Datasets}
To validate the effectiveness of our method, we conducted experiments using the widely-used \textbf{Kodak} dataset \cite{kodak1993}, consisting of 24 lossless images with a resolution of $768 \times 512$. To evaluate the robustness of our method in a higher-resolution setting, we also experimented with the \textbf{CLIC 2024} validation dataset \cite{clic2024}, which contains 30 high-resolution, high-quality images. Additionally, we assessed the compression methods by their ability to retain visual semantics. This was achieved by evaluating a pre-trained ImageNet classifier on compressed images from the \textbf{ImageNet} validation set \cite{russakovsky2015imagenet}, consisting of 50,000 images with a resolution of $224 \times 224$ across 1,000 classes.

\paragraph{Metrics}
To evaluate the rate-distortion performance of methods on the Kodak and CLIC 2024 datasets, we measured the bit rate in bits per pixel (bpp) and assessed the quality of the reconstructed images using peak signal-to-noise ratio (PSNR) and structural similarity index measure (SSIM). Then, these metrics were plotted as functions of bit rate for each method to illustrate their rate-distortion performance. To control the quality of the reconstructed images in QMF and SVD, similar to JPEG, we defined a quality factor $Q\in[0,1]$, where 0 represents the highest compression and 1 represents no compression. To determine the factorization rank $R$ in Algorithm \ref{alg:bcd_for_qmf}, we used $R = \max\left\{\round\left(Q \times \min\{M,N\}\right), 1\right\}$.

More precisely, to construct a rate-distortion curve for each method on each dataset, we evaluated various qualities $Q$ for each image. For each quality, we first measured the PSNR/SSIM values at the corresponding bit rate. Next, PSNR/SSIM values were interpolated at evenly spaced bit rates ranging from 0.05 bpp to 0.5 bpp using LOESS (locally estimated scatterplot smoothing) \cite{cleveland1988locally}. Finally, the interpolated values were averaged over all images at each of these bit rates.

\subsection{Rate-Distortion Performance} \label{sec:rate_distortion_performance}

Figure \ref{fig:rate_distortion} illustrates the rate-distortion curves comparing the performance of QMF, SVD, and JPEG compression methods.

\begin{figure*}[!t]
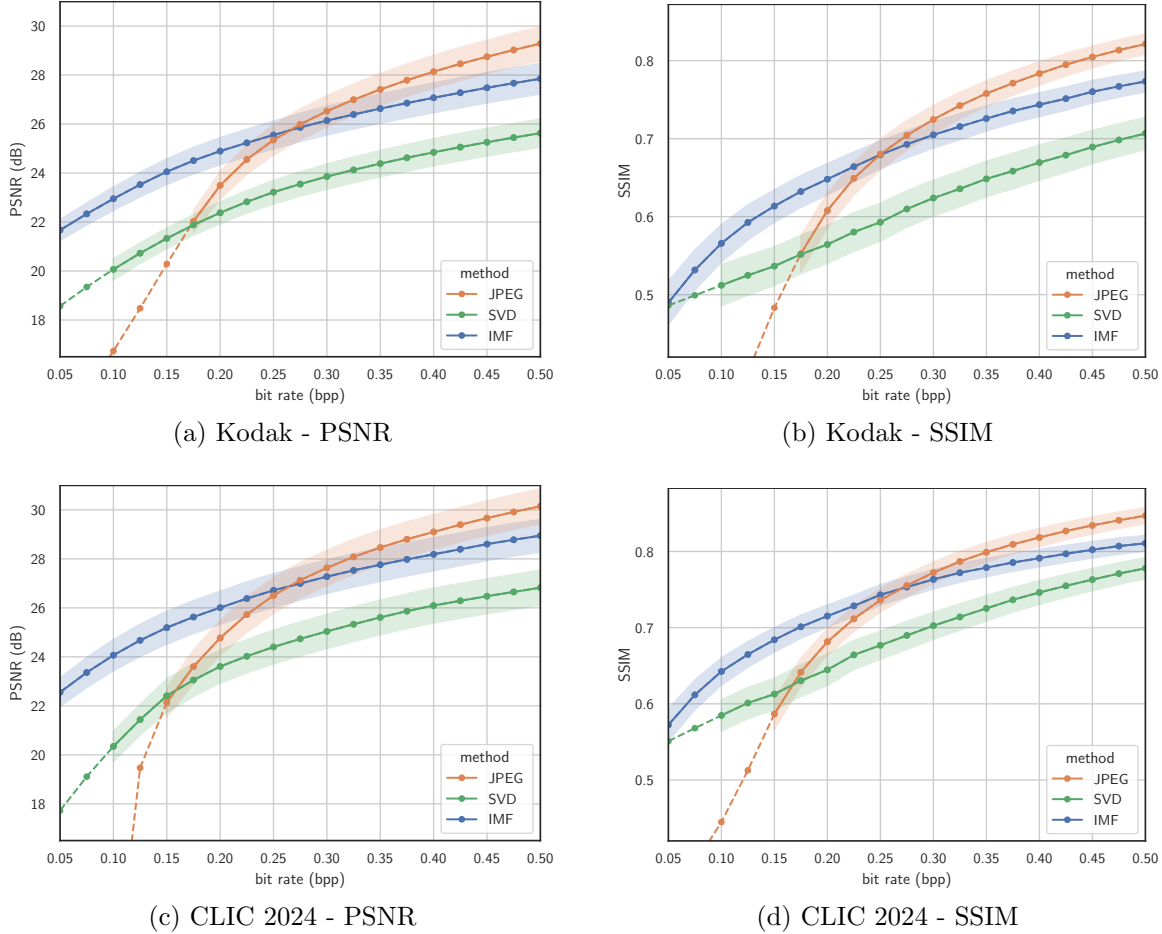

    \centering
    \begin{subfigure}[t]{0.5\textwidth}
        \centering
        \resizebox{.9\textwidth}{!}{\input{figures/kodak_psnr.pgf}}
        \caption{Kodak - PSNR}
        \label{fig:kodak_psnr}
    \end{subfigure}%
    \begin{subfigure}[t]{0.5\textwidth}
        \centering
        \resizebox{.9\textwidth}{!}{\input{figures/kodak_ssim.pgf}}
        \caption{Kodak - SSIM}
        \label{fig:kodak_ssim}
    \end{subfigure}

    \bigskip

    \begin{subfigure}[t]{.5\textwidth}
        \centering
        \resizebox{.9\textwidth}{!}{\input{figures/clic2024_psnr.pgf}}
        \caption{CLIC 2024 - PSNR}
        \label{fig:clic_psnr}
    \end{subfigure}%
    \begin{subfigure}[t]{.5\textwidth}
        \centering
        \resizebox{.9\textwidth}{!}{\input{figures/clic2024_ssim.pgf}}
        \caption{CLIC 2024 - SSIM}
        \label{fig:clic_ssim}
    \end{subfigure}
    \caption{Rate-distortion performance on the Kodak (top panels) and CLIC 2024 (bottom panels) datasets. The average PSNR (left panels) and average SSIM (right panels) for each method are plotted as functions of bit rate. Shaded areas represent standard errors. Dashed lines indicate extrapolated values predicted using LOESS \cite{cleveland1988locally} for extremely low bit rates that are otherwise unattainable.}
    \label{fig:rate_distortion}
\end{figure*}

\paragraph{Kodak}
On the Kodak dataset, as shown in Figures \ref{fig:kodak_psnr} and \ref{fig:kodak_ssim}, our QMF method consistently outperforms JPEG at low bit rates below 0.25 bpp and remains comparable at higher bit rates in terms of both PSNR and SSIM. Furthermore, QMF significantly surpasses the SVD-based baseline across all bit rates. \remove{This superior performance can be attributed to the high sensitivity of SVD to quantization errors during encoding and decoding, which leads to increased reconstruction errors. In contrast, the quantization-free nature of QMF allows for more accurate reconstruction.}

\paragraph{CLIC 2024}
A similar trend is observed with the CLIC 2024 dataset, as shown in Figures \ref{fig:clic_psnr} and \ref{fig:clic_ssim}. Here, the PSNR (Figure \ref{fig:clic_psnr}) and SSIM (Figure \ref{fig:clic_ssim}) results further confirm the competitive performance of QMF across all bit rates, with a particularly notable margin at bit rates lower than 0.25 bpp. Specifically, at a bit rate of 0.15 bpp, QMF achieves an PSNR of over 25 dB, compared to approximately 22 dB for both JPEG and SVD. This supports the robustness of QMF in preserving visual quality across different datasets.

\subsection{Qualitative Performance}

Figure \ref{fig:qualitative_comparison} compares various compression methods using images from the Kodak (top two rows) and CLIC 2024 (bottom two rows) datasets, compressed at similar bit rates.

\begin{figure*}[!t]
    \captionsetup[subfigure]{labelformat=empty,aboveskip=1pt,belowskip=6pt}
    \centering

    \begin{subfigure}[t]{0.25\textwidth}
        \centering
        \textbf{Original Image}
    \end{subfigure}%
    \begin{subfigure}[t]{0.25\textwidth}
        \centering
        \textbf{JPEG}
    \end{subfigure}%
    \begin{subfigure}[t]{0.25\textwidth}
        \centering
        \textbf{SVD}
    \end{subfigure}%
    \begin{subfigure}[t]{0.25\textwidth}
        \centering
        \textbf{QMF}
    \end{subfigure}

    \begin{subfigure}[t]{0.25\textwidth}
        \centering
        \drawrectangleonfigure{blue}{.95\textwidth}{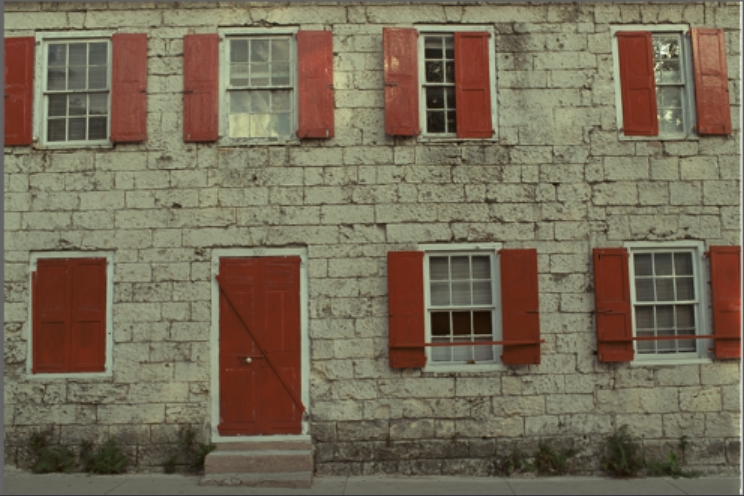}{0.37}{0.62}{0.59}{0.85}
    \end{subfigure}%
    \begin{subfigure}[t]{0.25\textwidth}
        \centering
        \drawrectangleonfigure{blue}{.95\textwidth}{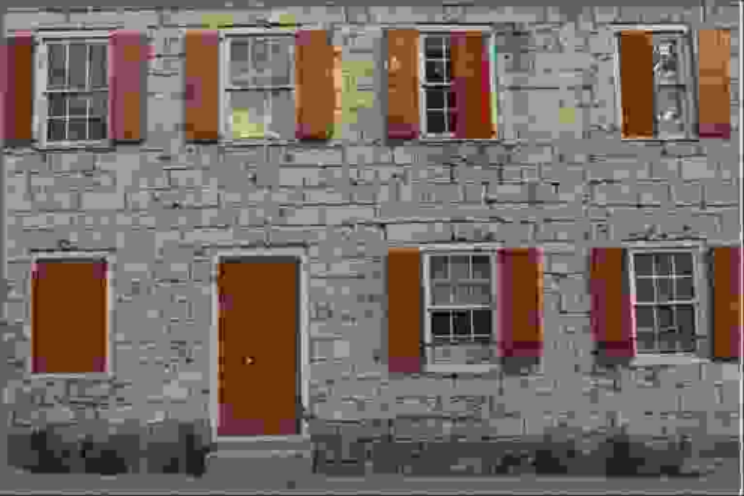}{0.37}{0.62}{0.59}{0.85}
        \caption{\scriptsize\textbf{bpp: 0.21, PSNR: 20.22dB}}
    \end{subfigure}%
    \begin{subfigure}[t]{0.25\textwidth}
        \centering
        \drawrectangleonfigure{blue}{.95\textwidth}{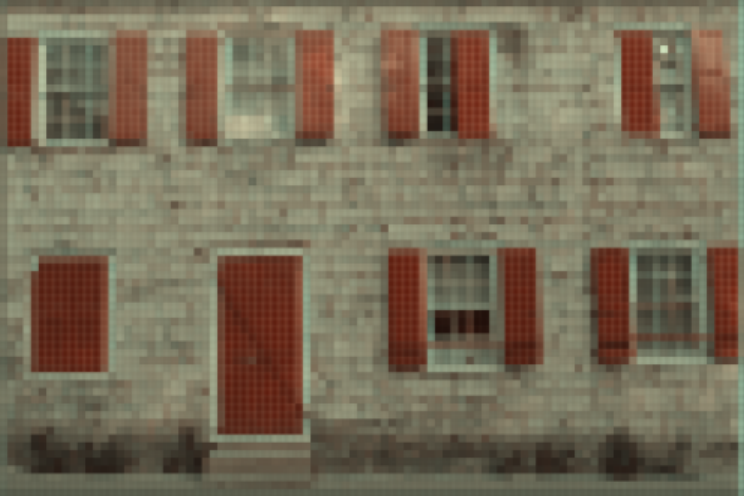}{0.37}{0.62}{0.59}{0.85}
        \caption{\scriptsize\textbf{bpp: 0.22, PSNR: 20.24dB}}
    \end{subfigure}%
    \begin{subfigure}[t]{0.25\textwidth}
        \centering
        \drawrectangleonfigure{blue}{.95\textwidth}{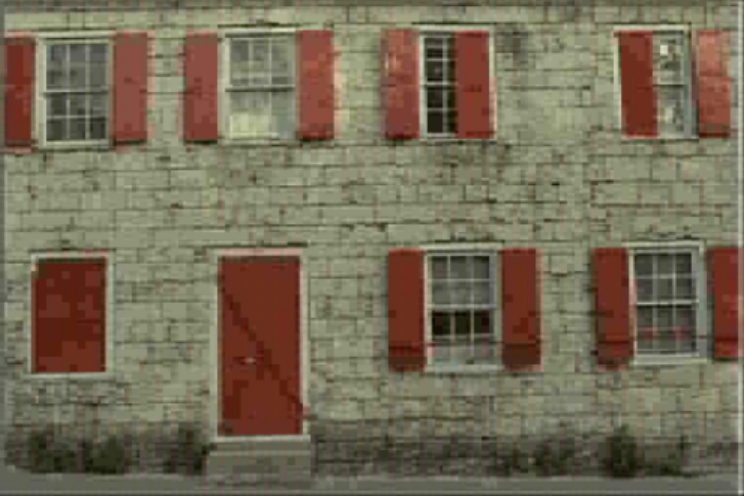}{0.37}{0.62}{0.59}{0.85}
        \caption{\scriptsize\textbf{bpp: 0.21, PSNR: 21.93dB}}
    \end{subfigure}

    \begin{subfigure}[t]{0.25\textwidth}
        \centering
        \drawrectangleonfigure{purple}{.95\textwidth}{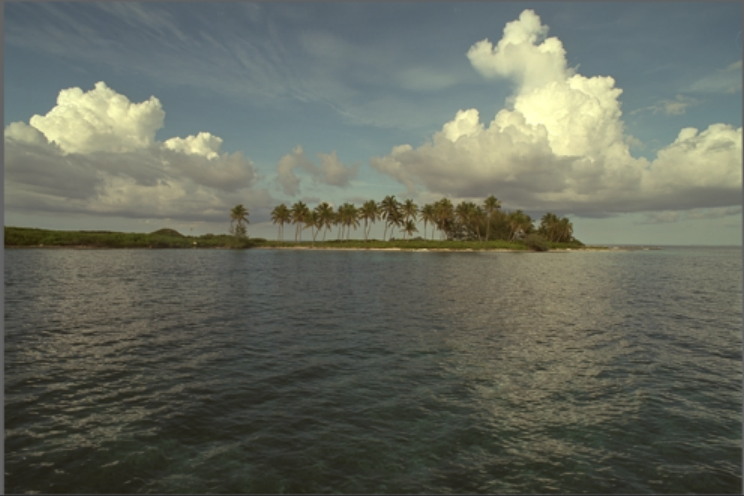}{0.42}{0.53}{0.7}{0.83}
    \end{subfigure}%
    \begin{subfigure}[t]{0.25\textwidth}
        \centering
        \drawrectangleonfigure{purple}{.95\textwidth}{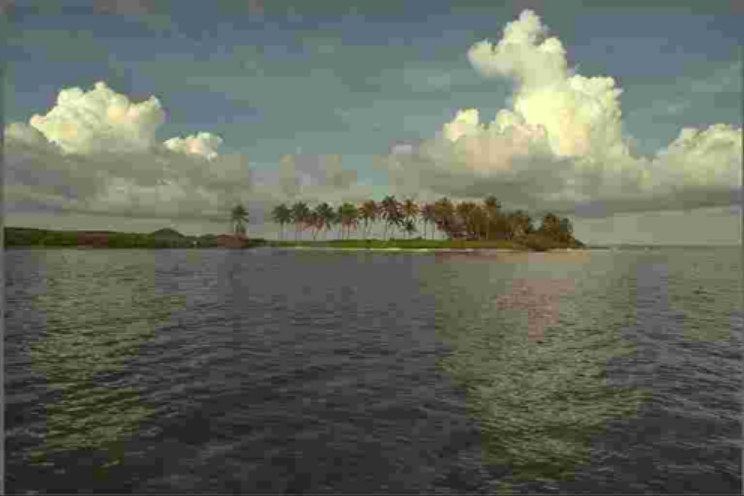}{0.42}{0.53}{0.7}{0.83}
        \caption{\scriptsize\textbf{bpp: 0.27, PSNR: 27.69dB}}
    \end{subfigure}%
    \begin{subfigure}[t]{0.25\textwidth}
        \centering
        \drawrectangleonfigure{purple}{.95\textwidth}{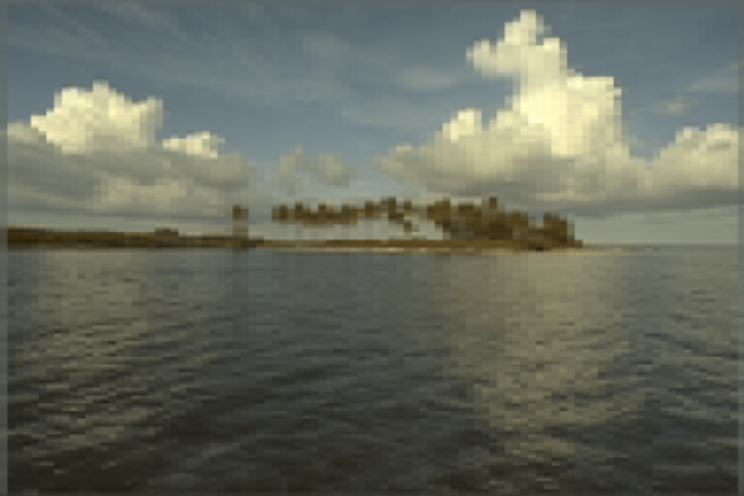}{0.42}{0.53}{0.7}{0.83}
        \caption{\scriptsize\textbf{bpp: 0.27, PSNR: 26.67dB}}
    \end{subfigure}%
    \begin{subfigure}[t]{0.25\textwidth}
        \centering
        \drawrectangleonfigure{purple}{.95\textwidth}{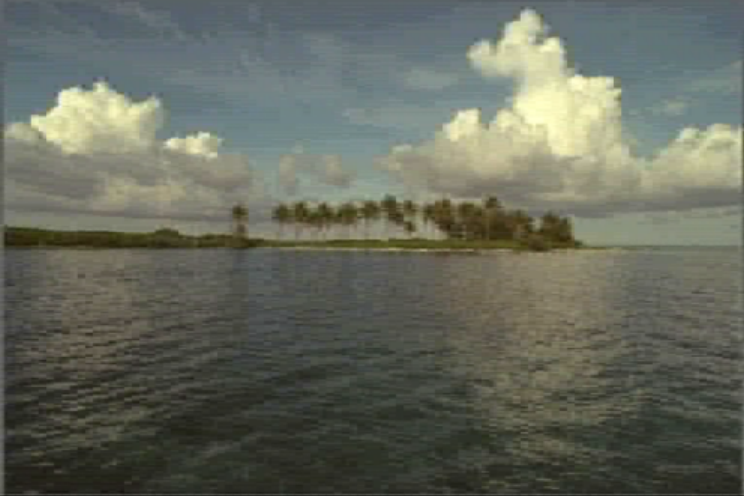}{0.42}{0.53}{0.7}{0.83}
        \caption{\scriptsize\textbf{bpp: 0.26, PSNR: 28.37dB}}
    \end{subfigure}

    \begin{subfigure}[t]{.25\textwidth}
        \centering
        \drawrectangleonfigure{cyan}{.95\textwidth}{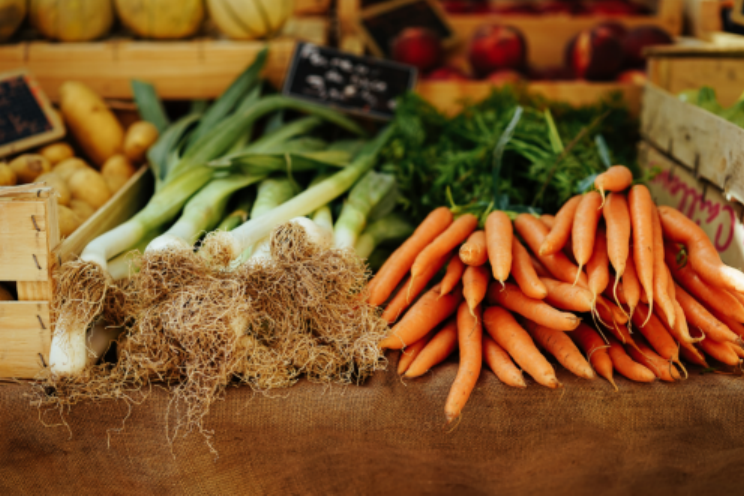}{0.78}{0.5}{0.98}{0.9}
    \end{subfigure}%
    \begin{subfigure}[t]{.25\textwidth}
        \centering
        \drawrectangleonfigure{cyan}{.95\textwidth}{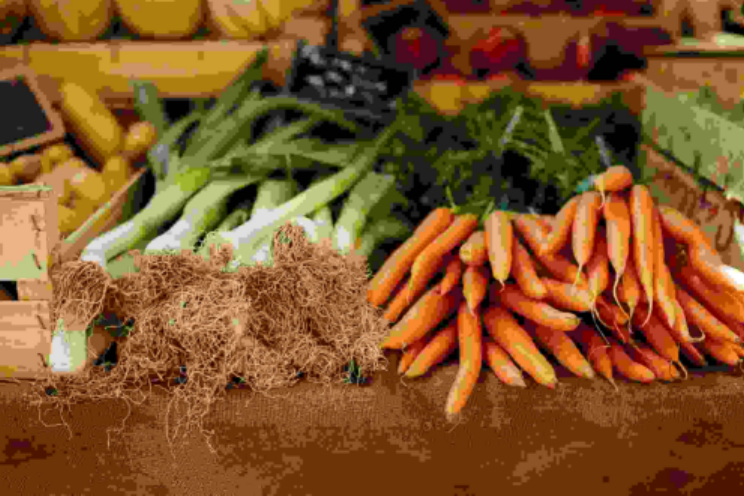}{0.78}{0.5}{0.98}{0.9}
        \caption{\scriptsize\textbf{bpp: 0.18, PSNR: 21.20dB}}
    \end{subfigure}%
    \begin{subfigure}[t]{.25\textwidth}
        \centering
        \drawrectangleonfigure{cyan}{.95\textwidth}{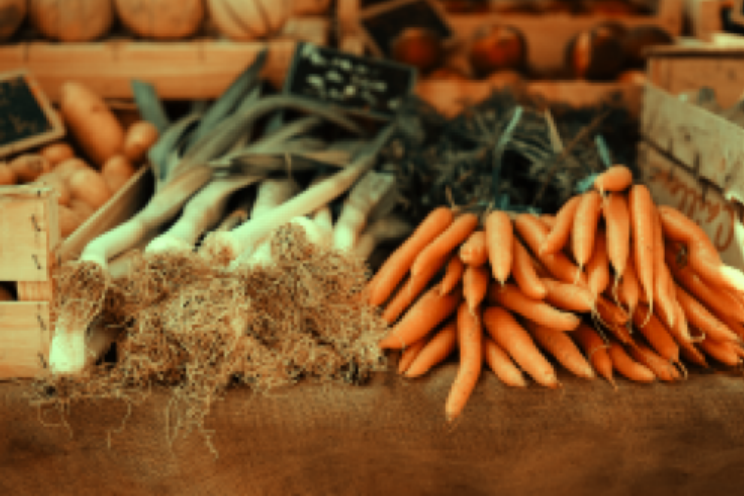}{0.78}{0.5}{0.98}{0.9}
        \caption{\scriptsize\textbf{bpp: 0.19, PSNR: 21.02dB}}
    \end{subfigure}%
    \begin{subfigure}[t]{.25\textwidth}
        \centering
        \drawrectangleonfigure{cyan}{.95\textwidth}{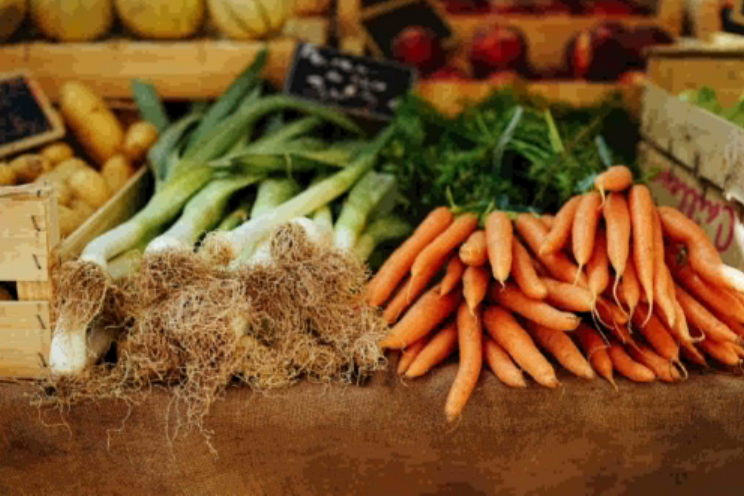}{0.78}{0.5}{0.98}{0.9}
        \caption{\scriptsize\textbf{bpp: 0.17, PSNR: 22.63dB}}
    \end{subfigure}

    \begin{subfigure}[t]{.25\textwidth}
        \centering
        \drawrectangleonfigure{yellow}{.95\textwidth}{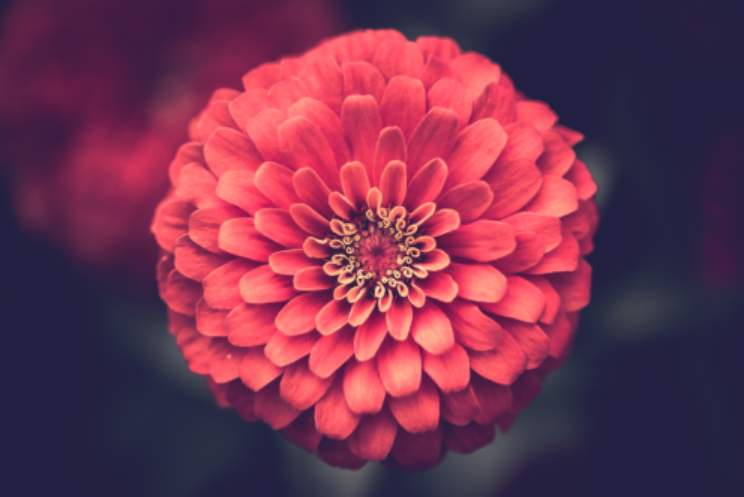}{0.15}{0.58}{0.4}{0.88}
    \end{subfigure}%
    \begin{subfigure}[t]{.25\textwidth}
        \centering
        \drawrectangleonfigure{yellow}{.95\textwidth}{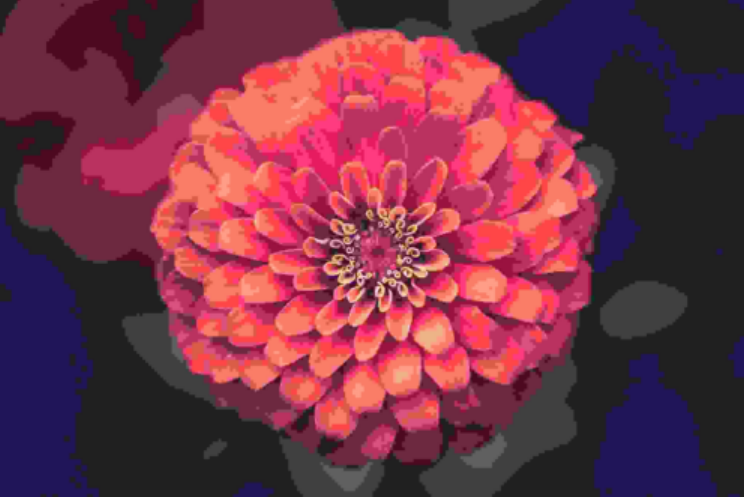}{0.15}{0.58}{0.4}{0.88}
        \caption{\scriptsize\textbf{bpp: 0.14, PSNR: 22.66dB}}
    \end{subfigure}%
    \begin{subfigure}[t]{.25\textwidth}
        \centering
        \drawrectangleonfigure{yellow}{.95\textwidth}{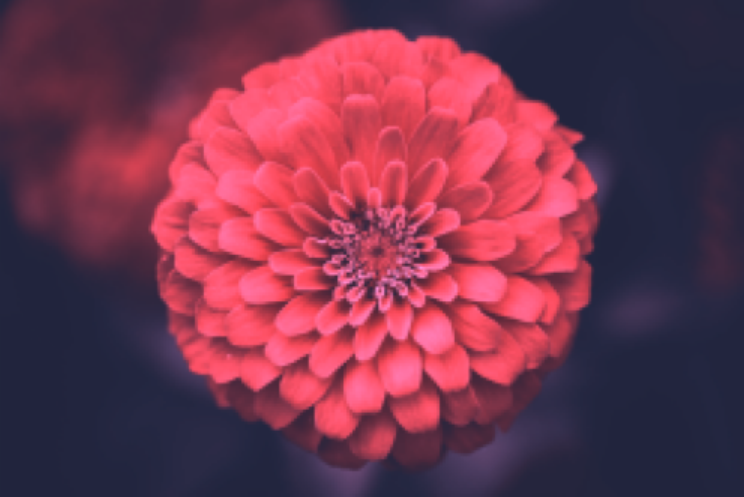}{0.15}{0.58}{0.4}{0.88}
        \caption{\scriptsize\textbf{bpp: 0.12, PSNR: 26.90dB}}
    \end{subfigure}%
    \begin{subfigure}[t]{.25\textwidth}
        \centering
        \drawrectangleonfigure{yellow}{.95\textwidth}{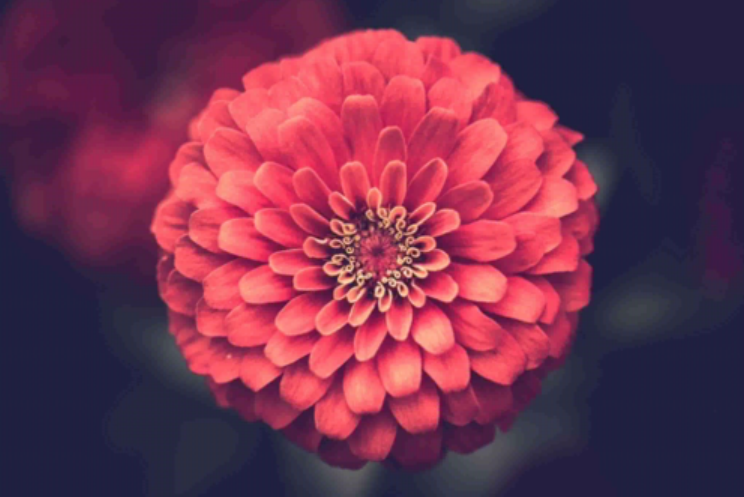}{0.15}{0.58}{0.4}{0.88}
        \caption{\scriptsize\textbf{bpp: 0.12, PSNR: 30.63dB}}
    \end{subfigure}

    \caption{Qualitative performance comparison on example images from the Kodak (top two rows) and the CLIC 2024 (bottom two rows) datasets. Each column shows the original image, JPEG, SVD, and QMF compression results respectively. The bit rate and PSNR values for each compressed image is reported. The colored bounding boxes highlight artifacts produced by JPEG and SVD compression.}
    \label{fig:qualitative_comparison}
\end{figure*}

In the building image (first row), JPEG compression, with a PSNR of 20.22 dB at a bit rate of 0.21 bpp, introduces \emph{blocking artifacts} and changes the facade color, as visible in the blue boxes. SVD compression reduces these artifacts but causes blurriness. Our QMF compression, with a similar bit rate but a higher PSNR (21.93 dB), maintains both texture and sharpness with minimal artifacts.

In the seascape image (second row), JPEG causes blocking and significant \emph{color bleeding artifacts}, such as the redness in the cloud area marked by the red boxes and also on the water surface (outside the red box). SVD reduces color distortion but still has blockiness and blurriness. QMF preserves the color and texture of clouds and water more effectively, resulting in a more visually pleasing image.

In the vegetables image (third row), JPEG yields visible \emph{color distortion} (marked by the cyan boxes), while SVD introduces significant blurriness. QMF, however, effectively preserves the color fidelity and detail.

In the flower image (fourth row), JPEG compression, with a PSNR of 20.22 dB at a bit rate of 0.14 bpp, exhibits severe \emph{color banding artifacts} around the flower boundary. SVD compression offers smoother gradients but remains blurry. Our QMF compression maintains the gradient fidelity and intricate petal distinctions, achieving a significantly higher PSNR of 30.63 dB at a lower bit rate of 0.12 bpp.

\subsection{Run Time} \label{sec:run_time}

The decoding times at bit rates of 0.15 bpp and 0.25 bpp for each method on Kodak and CLIC 2024 are reported in Table \ref{tab:run_time}. All experiments in this section were conducted on 2 Xeon Gold 6140 CPUs @ 2.3 GHz (Skylake), each with 18 cores, and with 192 GiB RAM.

\begin{table}[!t]
    \caption{Mean decoding CPU times for different compression methods at bit rates of 0.15 bpp and 0.25 bpp, measured on the Kodak and CLIC 2024 datasets.}
    \label{tab:run_time}
    \centering
    \def\arraystretch{1.2}
    \resizebox{\linewidth}{!}{
        \begin{tabular}{l | c | c | c | c}
            \toprule
            \multirow{2}{*}{Method} & \multicolumn{2}{c|}{Kodak}              & \multicolumn{2}{c}{CLIC 2024}                                                                                     \\
            \cmidrule(lr){2-3} \cmidrule(lr){4-5}
                                    & Bit rate = 0.15 bpp                      & Bit rate = 0.25 bpp                      & Bit rate = 0.15 bpp & Bit rate = 0.25 bpp \\
            \midrule
            JPEG                    & 4.54 ms                                 & 4.23 ms                                 & 26.76 ms           & 25.75 ms           \\
            SVD                     & 1.33 ms                                 & 1.23 ms                                 & 5.29 ms            & 4.82 ms            \\
            QMF                     & 2.82 ms                                 & 2.66 ms                                 & 9.91 ms            & 9.06 ms            \\
            \bottomrule
        \end{tabular}
    }
\end{table}

QMF and SVD have a significant advantage in decoding speed over JPEG, with SVD being the fastest. Specifically, QMF decodes more than twice as fast as JPEG on the CLIC 2024 dataset across all bit rates. This is due to the heavier FFT operation in the JPEG decoder compared to the lighter matrix multiplication in the QMF decoder. Overall, QMF is preferable for applications requiring high image quality at low bit rates, especially in scenarios where compressed images are frequently accessed or displayed while encoding occurs less often. Examples include web browsing, image hosting, mobile applications, satellite imagery for maps, and interactive gaming applications. These use cases often involve large numbers of images or thumbnails, where faster decoding times are essential for seamless user experiences, even if encoding takes more time or demands higher computational resources.

\subsection{ImageNet Classification Performance} \label{sec:imagenet_classification_performance}

It is relevant to assess the ability of different compression methods in preserving the visual semantic information in images. To this end, we investigate the performance of an image classifier on images compressed using various compression methods. This is particularly crucial in scenarios where the ultimate goal is a vision task such as image classification, rather than maintaining perceived image quality, and we compress images before classification to minimize resource requirements, such as memory and communication bandwidth.

In this experiment, we employed a ResNet-50 classifier \cite{he2016deep}, pre-trained on the original ImageNet \cite{deng2009imagenet} dataset, to classify compressed images from the ImageNet validation set using different compression methods. The classification performance comparison is presented in Figure \ref{fig:imagenet_classification}. Notably, the results indicate that QMF compression achieves over a 5\% improvement in top-1 accuracy compared to JPEG at bit rates under 0.25 bpp and reaches a top-5 accuracy exceeding 70\% at a bit rate of 0.2 bpp. QMF compression leads to higher classification accuracies than JPEG at bit rates up to approximately 0.30 bpp. \remove{Compared to the rate-distortion results (Figure \ref{fig:rate_distortion}), where the take-over point was around 0.25 bpp, this shift suggests that the superiority of QMF over JPEG in preserving visual semantics is even greater than its advantage in maintaining image quality.}

\begin{figure*}[!t]
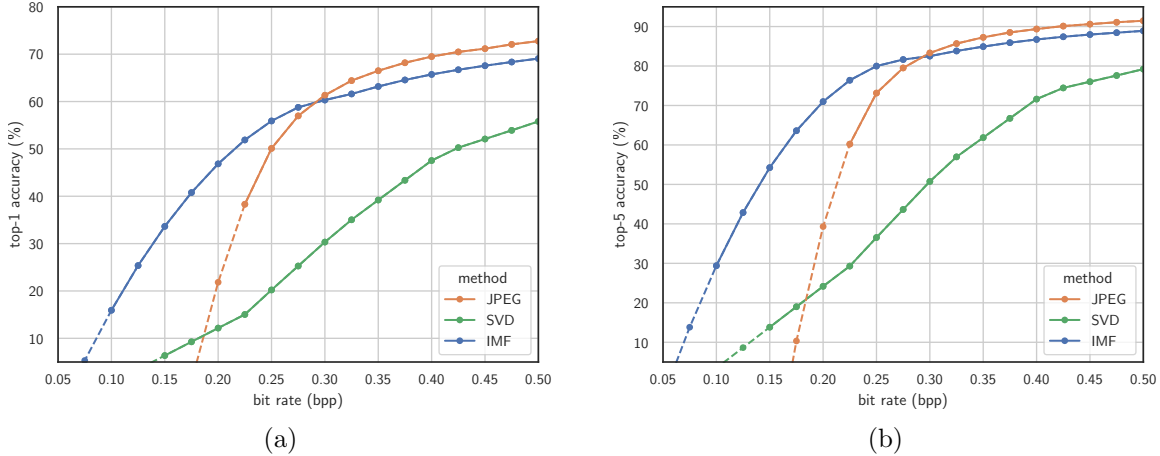

    \centering
    \begin{subfigure}[t]{.5\textwidth}
        \centering
        \resizebox{.9\textwidth}{!}{\input{figures/imagenet_top-1_accuracy.pgf}}
        \caption{}
        \label{fig:imagenet_top1}
    \end{subfigure}%
    \begin{subfigure}[t]{.5\textwidth}
        \centering
        \resizebox{.9\textwidth}{!}{\input{figures/imagenet_top-5_accuracy.pgf}}
        \caption{}
        \label{fig:imagenet_top5}
    \end{subfigure}
    \caption{Impact of different compression methods on ImageNet classification accuracy. A ResNet-50 classifier pre-trained on the original ImageNet images is evaluated using validation images compressed by different methods. Panels (a) and (b) show top-1 and top-5 accuracy plotted as a function of bit rate, respectively. Dashed lines indicate extrapolated values predicted using LOESS \cite{cleveland1988locally} for extremely low bit rates that are otherwise unattainable.}
    \label{fig:imagenet_classification}
\end{figure*}

\subsection{Ablation Studies} \label{sec:ablation_studies}

We conducted ablation studies to investigate the impact of factor bounds, the number of BCD iterations, and patch size on the compression performance of our QMF method. All experiments in this section were performed using the Kodak dataset. We followed the QMF configuration described in Section \ref{sec:setup} and varied only the parameters under ablation one at a time.

\paragraph{Factor Bounds}
Figure \ref{fig:bounds_ablation_psnr} shows the average PSNR as a function of bit rate for QMF using various factor bounds $[\alpha, \beta]$ in Algorithm \ref{alg:bcd_for_qmf}. The results indicate that the interval $[-16, 15]$ yields the optimal performance, showing moderate improvement over both $[-32, 31]$ and $[-128, 127]$, while significantly outperforming $[-8, 7]$. In fact, constraining the factor elements within a sufficiently narrow range can reduce the bit allocation needed, thereby leading to higher compression ratios. Note that in all these cases, the factor elements are represented as the \texttt{int8} data type.

\remove{Narrowing the bounds reduces entropy, which improves the effectiveness of lossless compression in the final stage of our framework and thus reduces the bit rate. However, this more constrained feasible set and reduction in entropy come at the cost of higher reconstruction error. Our experiments showed that using the bounds [-16, 15] provides the best trade-off between entropy and reconstruction quality.}

\paragraph{BCD Iterations}
The next parameter studied is the number of BCD iterations $K$ in Algorithm \ref{alg:bcd_for_qmf}, where each BCD iteration involves one complete cycle of updates across all the columns of both factors. Figure \ref{fig:bounds_ablation_psnr} shows the average PSNR plotted against the bit rate for QMF with different numbers of iterations $K \in \{0, 1, 2, 5, 10\}$. As expected, more iterations consistently resulted in higher PSNR for QMF compression. Without any BCD iterations ($K=0$) and relying solely on the SVD-based initialization given by \eqref{eq:initialization_u} and \eqref{eq:initialization_v}, the results became very poor. However, performance improved significantly after a few iterations, with more than $K=5$ iterations yielding only marginal gains. We found that $K=10$ iterations are sufficient in practice for image compression applications. This makes QMF computationally efficient, as decent compression performance can be achieved even with a limited number of BCD iterations.

\paragraph{Patchification}
Figure \ref{fig:patchsize_ablation_psnr} explores the impact of different patch sizes on QMF performance in terms of PSNR. As observed, a patch size of $(8, 8)$ yields the best performance. A patch size of $(16, 16)$ follows closely, with only marginally lower PSNR at higher bit rates. Conversely, larger patch sizes like $(32, 32)$ or omitting the patchification step altogether significantly degrade compression performance. \remove{Therefore, optimal patchification positively affects performance by effectively modeling the locality and spatial dependencies of neighboring pixels.}

\begin{figure*}[!t]
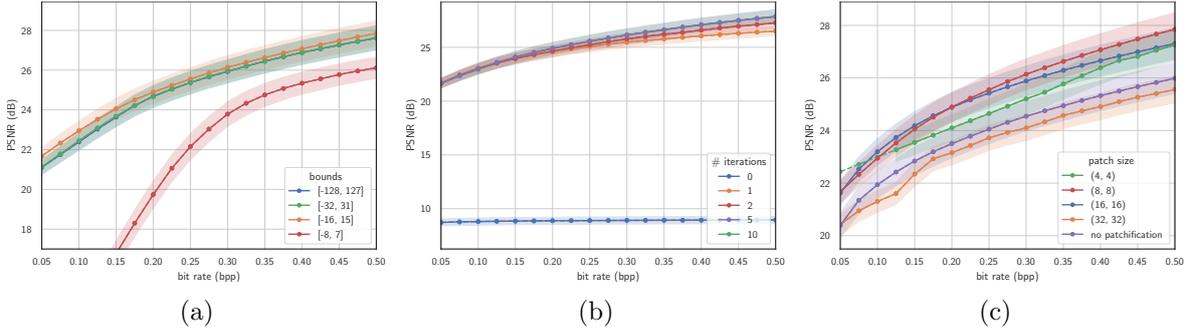

    \centering
    \begin{subfigure}[t]{.33\textwidth}
        \centering
        \resizebox{.95\textwidth}{!}{\input{figures/ablation_bounds_psnr.pgf}}
        \caption{}
        \label{fig:bounds_ablation_psnr}
    \end{subfigure}%
    \begin{subfigure}[t]{.33\textwidth}
        \centering
        \resizebox{.95\textwidth}{!}{\input{figures/ablation_numiters_psnr.pgf}}
        \caption{}
        \label{fig:iters_ablation_psnr}
    \end{subfigure}%
    \begin{subfigure}[t]{.33\textwidth}
        \centering
        \resizebox{.95\textwidth}{!}{\input{figures/ablation_patchsize_psnr.pgf}}
        \caption{}
        \label{fig:patchsize_ablation_psnr}
    \end{subfigure}
    \caption{Ablation studies for QMF. The average PSNR on the Kodak dataset is plotted as a function of bit rate under various experimental conditions: (a) varying the bounds $[\alpha, \beta]$ for the elements of the factor matrices, (b) changing the number of BCD iterations, and (c) adjusting the patch size.}
    \label{fig:ablation_studies}
\end{figure*}

\section{Discussion} \label{sec:discussion}

All our comparative results (Figure \ref{fig:rate_distortion} and Figure \ref{fig:imagenet_classification}) consistently show that our QMF method outperforms JPEG in both maintaining image quality and preserving visual semantics at low bit rates and remains comparable at higher bit rates. Moreover, QMF consistently demonstrates superior performance compared to SVD across all bit rates. This superiority can be attributed to the integration of quantization with low-rank approximation in QMF, which enables more accurate reconstruction. In contrast, the high sensitivity of SVD to quantization errors, which arise during a separate quantization step, significantly degrades the reconstruction quality.

\remove{Although the rate-distortion curves (Figure \ref{fig:rate_distortion}) show that QMF is surpassed by JPEG at a bit rate of around 0.25 bpp, the ImageNet results (Figure \ref{fig:imagenet_classification}) reveal a take-over point at around 0.3 bpp. This small shift suggests that the superiority of QMF over JPEG at low bit rates in terms of preserving visual semantics is even greater than its advantage in maintaining image quality.}

As observed in Figure \ref{fig:bounds_ablation_psnr}, contracting the QMF factor bounds from $[-128, 127]$ to $[-16, 15]$ consistently improves the rate-distortion performance. Generally, narrowing the factor bounds $[\alpha, \beta]$ can potentially lower the entropy, thereby improving the effectiveness of lossless compression in the final stage of our framework and subsequently reducing the bit rate. However, this reduction in entropy comes at the cost of increased reconstruction error, as the feasible set in \eqref{eq:qmf_problem} becomes more constrained. This trade-off between entropy and reconstruction quality limits the compression performance of QMF. Therefore, it would be beneficial\remove{to relax the factor bounds to some extent}\remove{to contract the factor bounds less tight} to moderately expand the factor bounds $[\alpha, \beta]$ while simultaneously controlling the entropy of the elements in the factor matrices. We plan to address this in the future by incorporating an entropy-aware regularization term into the current QMF objective function.

Patchification with an appropriate patch size (e.g., $(8, 8)$) helps capture local spatial dependencies and, as confirmed by our results in Figure \ref{fig:patchsize_ablation_psnr}, positively impacts the performance of QMF and SVD. However, discontinuities at patch boundaries can introduce \emph{blocking artifacts}, similar to JPEG compression at very low bit rates (see the building image example in Figure \ref{fig:qualitative_comparison}). Moreover, while JPEG suffers more from \emph{color distortion} (e.g., \emph{color bleeding} and \emph{color banding}) at low bit rates, QMF and SVD are more affected by \emph{blurriness}, as observed in the seascape image example in Figure \ref{fig:qualitative_comparison}. As a potential solution for future work, a deep neural network could be trained to remove these artifacts and then integrated as a post-processing module to further enhance the quality of QMF-compressed images.

\section{Conclusion} \label{sec:conclusion}

This work presents a novel lossy image compression method based on quantization-aware matrix factorization (QMF). By representing image data as the product of two smaller factor matrices with elements constrained to bounded integer values, the proposed QMF approach effectively integrates quantization with low-rank approximation. In contrast, traditional compression methods such as JPEG and SVD consider quantization as a separate step, where quantization errors cannot be incorporated into the compression process. The reshaped factor matrices in QMF are compatible with existing lossless compression standards, enhancing the overall flexibility and efficiency of our method. Our proposed iterative algorithm, utilizing a block coordinate descent scheme, has proven to be both efficient and convergent. Experimental results demonstrate that the QMF method significantly outperforms JPEG in terms of PSNR and SIMM at low bit rates and maintains better visual semantic information. This advantage underscores the potential of QMF to set a new standard in lossy image compression, bridging the gap between factorization and quantization.

\remove{A limitation of QMF lies in its inability to control the entropy of the elements in the factor matrices, which could enhance the performance of the subsequent entropy-based lossless coder. We plan to address this in the future by incorporating entropy-aware regularization into the current QMF objective function.}
\section*{Acknowledgments}

This research received funding from Flanders AI Research Program. Sabine Van Huffel and Pooya Ashtari are affiliated to Leuven.AI - KU Leuven institute for AI, B-3000, Leuven, Belgium.

\appendix
\section{Proof of Theorem \ref{the:bcd_subproblem}} \label{app:monotonicity_proof}

We start by proving the closed-form solution \eqref{eq:bcd_closed_form_subproblem_u}, noting that the proof for \eqref{eq:bcd_closed_form_subproblem_v} follows the same reasoning.
The objective function in the subproblem \eqref{eq:bcd_subproblem_u} can be reformulated as follows:
\begin{align}
    \argmin_{\bm{u}_r \in \Z_{[\alpha,\beta]}^M} \|\bm{E}_r - \bm{u}_r \bm{v}_r^\mathsf{T}\|_{\rm F}
    = \argmin_{\bm{u}_r \in \Z_{[\alpha,\beta]}^M}\sum_{i=1}^M \sum_{j=1}^N (e_{ij}^r - u_i^r v_j^r)^2,
    \label{eq:bcd_subproblem_u_reformed}
\end{align}
where $e_{ij}^r$ denotes the element of matrix $\bm{E}_r$ in the $i$th row and $j$th column, and $u_i^r$ and $v_j^r$ are the $i$th and $j$th elements of vectors $\bm{u}_r$ and $\bm{v}_r$, respectively. Since the elements of $\bm{E}_r$ and $\bm{v}_r$ are fixed in problem \eqref{eq:bcd_subproblem_u}, the optimization \eqref{eq:bcd_subproblem_u_reformed} can be decoupled into $M$ optimizations as follows:
\begin{align}
    \argmin_{u^r_i \in \Z_{[\alpha,\beta]}} ~q_i(u_i^r), \quad
    \textstyle \text{where} ~~              q_i(u_i^r) \triangleq \sum_{j=1}^N (e_{ij}^r - u_i^r v_j^r)^2, \quad \forall i\in\{1,\dots,M\}.
    \label{eq:bcd_subproblem_u_decoupled}
\end{align}
The objective functions $q_i(u_i^r)$ in \eqref{eq:bcd_subproblem_u_decoupled} are single-variable quadratic problems. Hence, the global optimum in each decoupled optimization problem can be achieved by finding the minimum of each quadratic problem and then projecting it onto the set $\Z_{[\alpha,\beta]}$. The minimum of each quadratic function in \eqref{eq:bcd_subproblem_u_decoupled}, denoted by $\bar{u}_i^r$, can be simply found by
\begin{align}
    \textstyle \nabla_{u_i^r} q_i(u_i^r) = 0 \implies \bar{u}_i^r = \nicefrac{\sum_{j=1}^N e_{ij}^r v_j^r}{\sum_{j=1}^N v_j^{r^2}},
    \label{eq:bcd_subproblem_u_minimum}
\end{align}
where $\nabla_x$ is the partial derivative with respect to $x$.
Since $q_i$ has a constant curvature (second derivative) and $q_i(\bar{u}_i^r + d)$ is nondecreasing with increasing $|d|$, the value in the set $\Z_{[\alpha,\beta]}$ which is closest to $\bar{u}_i^r$ is the global minimizer of \eqref{eq:bcd_subproblem_u_decoupled}. This value can be reached by projecting $\bar{u}_i^r$ onto the set $\Z_{[\alpha,\beta]}$, namely $u^{r^\star}_i = \clamp_{[\alpha,\beta]}(\round(\bar{u}_i^r))$, which is presented for all $i\in\{1,...,M\}$ in a compact form in \eqref{eq:bcd_closed_form_subproblem_u}.

Since $\bm{u}_r^\star \triangleq (u^{r^\star}_1,\dots,u^{r^\star}_M)$ is the global optimum of optimization \eqref{eq:bcd_subproblem_u_reformed}, it is evident that
\begin{equation}
    \|\bm{E}_r - \bm{u}_r^\star \bm{v}_r^\mathsf{T}\|_{\rm F}  \leq \|\bm{E}_r - \bm{u}_r \bm{v}_r^\mathsf{T}\|_{\rm F}.
\end{equation}
This inequality guarantees a nonincreasing cost function over one update of $\bm u_r$. Following the same reasoning for updates of $\bm v_r$ in \eqref{eq:bcd_closed_form_subproblem_v}, it can be concluded that in each update of \eqref{eq:bcd_closed_form_subproblem_u} and \eqref{eq:bcd_closed_form_subproblem_v}, the cost function is nonincreasing. Therefore, the sequential updates over the columns of $\bm U$ and $\bm V$ in Algorithm \ref{alg:bcd_for_qmf} result in a monotonically nonincreasing cost function in \eqref{eq:qmf_problem}.

\section{Proof of Theorem \ref{thm:convergence}} \label{app:convergence_proof}

To study the convergence of the proposed Algorithm \ref{alg:bcd_for_qmf}, we recast the optimization problem \eqref{eq:qmf_problem} into the following equivalent problem:
\begin{align}
    \begin{split}
        \minimize_{U_{:r} \in \R^{M}, V_{:r} \in \R^{N}, \forall r\in\{1,...,R\}} \quad \Psi(\bm U, \bm V)
    \end{split}
    \label{eq:qmf_surrogate}
\end{align}
where
\begin{align*}
    \Psi(\bm U, \bm V) & \triangleq f_0(\bm U, \bm V) + \sum_{r=1}^R f(U_{:r}) + \sum_{r=1}^R g(V_{:r}), \\
    f_0(\bm U, \bm V)  & \triangleq \|\bm X - \bm U \bm V^{\rm T} \|_{\rm F}^2,                          \\
    f(U_{:r})          & \triangleq \delta_{[a,b]}(U_{:r}) + \delta_\Z(U_{:r}),                          \\
    g(V_{:r})          & \triangleq \delta_{[a,b]}(V_{:r}) + \delta_\Z(V_{:r}),
\end{align*}
with $\delta_\mathcal{B}(\cdot)$ as the indicator function of the nonempty set $\mathcal{B}$ where $\delta_\mathcal{B}(\bm x) = 0$ if $\bm x \in \mathcal{B}$ and $\delta_\mathcal{B}(\bm x) = +\infty$, otherwise. By the definition of functions above, it is easy to confirm that the problems \eqref{eq:qmf_problem} and \eqref{eq:qmf_surrogate} are equivalent.

The unconstrained optimization problem \eqref{eq:qmf_surrogate} consists of the sum of a differentiable (smooth) function $f_0$ and nonsmooth functions $f$ and $g$. This problem has been extensively studied in the literature under the class of nonconvex nonsmooth minimization problems.
In Algorithm \ref{alg:bcd_for_qmf}, the blocks $U_{:r}$ and $V_{:r}$ are updated sequentially following block coordinate descent (BCD) minimization algorithms, also often called Gauss-Seidel updates or alternating optimization \cite{nesterov2012efficiency,attouch2013convergence}.
Hence, in this convergence study, we are interested in algorithms that allow BCD-like updates for the nonconvex nonsmooth problem of \eqref{eq:qmf_surrogate} \cite{beck2013convergence,bolte2014proximal}. Specifically, we focus on the proximal alternating linearized minimization (PALM) algorithm \cite{bolte2014proximal}, to relate its convergence behavior to that of Algorithm \ref{alg:bcd_for_qmf}.
To that end, we show that the updates of Algorithm \ref{alg:bcd_for_qmf} are related to the updates of PALM on the recast problem of \eqref{eq:qmf_surrogate}, and all the assumptions necessary for the convergence of PALM are satisfied by our problem setting.
It is noted that, for the sake of presentation and without loss of generality, in this proof, we assume each of the matrices $\bm U$ and $\bm V$ has only one column ($R=1$); hence, we only have two blocks in the BCD updates. The iterates in PALM and the presented proof can be trivially extended for more than two blocks.

The PALM algorithm can be summarized as follows:
\begin{enumerate}
    \item Initialize $\bm U^{\rm init} \in \R^{M\times R}$, $\bm V^{\rm init} \in \R^{N\times R}$
    \item For each iteration $k=0,1,...$
          \begin{align}
              \begin{split}
                  \mysubnumber~ \bm U^{k+1} & \in \prox^f_{c_k} \left(\bm U^k - \frac{1}{c_k} \nabla_{\bm U} f_0(\bm U^k, \bm V^k)\right),     \\
                  \mysubnumber~ \bm V^{k+1} & \in \prox^g_{d_k} \left(\bm V^k - \frac{1}{d_k} \nabla_{\bm V} f_0(\bm U^{k+1}, \bm V^k)\right),
              \end{split}
              \label{eq:palm_updates}
          \end{align}
\end{enumerate}
where the proximal map for an extended proper lower semicontinuous (nonsmooth) function $\func{\varphi}{\R^n}{(-\infty,+\infty]}$ and $\gamma > 0$ is defined as $\prox^\varphi_\gamma(\bm x) \triangleq \argmin_{\bm w\in\R^n}\left\{\varphi(\bm w) + \frac{\gamma}{2} \|\bm w - \bm x\|^2_2\right\}$. In \eqref{eq:palm_updates}, $c_k > L_1(\bm V^k)$ and $d_k > L_2(\bm U^{k+1})$ where $L_1 > 0$, $L_2 > 0$ are local Lipschitz moduli, defined in the following proposition.

The following proposition investigates the necessary assumptions (cf. \cite[Asm. 1 and Asm. 2]{bolte2014proximal}) for convergence of iterates in \eqref{eq:palm_updates}.
\begin{prop}[Meeting required assumptions]\label{prop:assumptions}
    The assumptions necessary for the convergence of iterates in \eqref{eq:palm_updates} are satisfied by the functions involved in the problem \eqref{eq:qmf_surrogate}, specifically:
    \begin{enumerate}
        \item The indicator functions $\delta_{[a,b]}$ and $\delta_\Z$ are proper and lower semicontinuous functions, so do the functions $f$ and $g$;
        \item For any fixed $\bm V$, the partial gradient $\nabla_{\bm U} f_0(\bm U, \bm V)$ is globally Lipschitz continuous with modulus $L_1(\bm V) = \|\bm V^T \bm V\|_{\rm F}$. Therefore, for all $\bm U_1,\bm U_2 \in \R^{M\times R}$ the following holds
              \begin{equation*}
                  \|\nabla_{\bm U} f_0(\bm U_1, \bm V) - \nabla_{\bm U} f_0(\bm U_2, \bm V)\| \leq L_1(\bm V) \|\bm U_1 - \bm U_2\|,
              \end{equation*}
              where $\|\cdot\|$ denotes the $\ell_2$-norm of the vectorized input with the proper dimension (here, with the input in $\R^{MR\times 1}$).
              The similar Lipschitz continuity is evident for $\nabla_{\bm V} f_0(\bm U, \bm V)$ as well with modulus $L_2(\bm U) = \|\bm U \bm U^T\|_{\rm F}$.
        \item The sequences $\bm U^k$ and $\bm V^k$ are bounded due to the indicator functions $\delta_{[a,b]}$ with bounded $a$ and $b$. Hence the moduli $L_1(\bm V^k)$ and $L_2(\bm U^k)$ are bounded from below and from above for all $k\in\N$.
        \item The function $f_0$ is twice differentiable, hence, its full gradient $\nabla f_0(\bm U,\bm V)$ is Lipschitz continuous on the bounded set $\bm U \in [a,b]^{M\times R}$, $\bm V \in [a,b]^{N\times R}$. Namely, with $M > 0$:
              \begin{align*}
                  \|\big(\nabla_{\bm U} f_0(\bm U_1, \bm V_1) - \nabla_{\bm U}  f_0(\bm U_2, \bm V_2),                             
                  \nabla_{\bm V} f_0(\bm U_1, \bm V_1)                          - \nabla_{\bm V} f_0(\bm U_2, \bm V_2)\big)\|      
                                                                                \leq M \|(\bm U_1 - \bm U_2, \bm V_1 - \bm V_2)\|,
              \end{align*}
              where $(\cdot,\cdot)$ denotes the concatenation of the two arguments.
        \item The sets $[a,b]$ and integer numbers are semi-algebraic; so are their indicator functions. The function $f_0$ is also polynomial, hence it is semi-algebraic. The sum of these functions results in a semi-algebraic function $\Psi$ in \eqref{eq:qmf_surrogate}, hence $\Psi$ is a Kurdyka-Łojasiewicz (KL) function.
    \end{enumerate}
\end{prop}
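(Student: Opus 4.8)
The plan is to verify the five listed assertions one at a time, since each is a structural property of the specific functions $f_0$, $f$, and $g$ appearing in the recast problem \eqref{eq:qmf_surrogate}, and none of them requires information about the algorithm's trajectory beyond feasibility of its iterates. I would treat them in the given order, reserving the semi-algebraic/KL claim for last, as it is the only one with a genuine subtlety.

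For item (1), I would note that $f = \delta_{[\alpha,\beta]} + \delta_\Z$ is exactly the indicator of the intersection $[\alpha,\beta]\cap\Z = \Z_{[\alpha,\beta]}$, a nonempty finite set (nonempty since $\alpha,\beta\in\Z$ with $\alpha\le\beta$). An indicator is proper whenever its effective domain is nonempty and lower semicontinuous if and only if that domain is closed; a finite set is closed, so $f$, and identically $g$, is proper and lower semicontinuous. For item (2), the key observation is that for fixed $\bm V$ the map $\bm U \mapsto \nabla_{\bm U} f_0(\bm U,\bm V) = 2(\bm U \bm V^{\rm T}\bm V - \bm X\bm V)$ is affine, so its difference over two points is the linear map $2(\bm U_1-\bm U_2)\bm V^{\rm T}\bm V$; bounding the Frobenius norm by $2\|\bm U_1-\bm U_2\|\,\|\bm V^{\rm T}\bm V\|$ exhibits a global block-wise Lipschitz modulus proportional to $\|\bm V^{\rm T}\bm V\|$, and the symmetric computation handles $\nabla_{\bm V} f_0$. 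The precise constant is immaterial for the convergence machinery; only finiteness of a block-wise modulus matters.

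Items (3) and (4) rely on compactness. Every iterate satisfies the hard constraint $\bm U^k,\bm V^k\in\Z_{[\alpha,\beta]}$, so each entry lies in $[\alpha,\beta]$ and the sequences live in a fixed compact box; consequently $L_1(\bm V^k)=\|(\bm V^k)^{\rm T}\bm V^k\|$ and $L_2(\bm U^k)$, being continuous functions of bounded arguments, are bounded above. The lower bound on the moduli — needed so that PALM's admissible step sizes stay in a fixed positive range — is the one point deserving care, since it degenerates only if a factor column vanishes; I would argue it is preserved by the SVD-based initialization together with the monotone descent established in Theorem \ref{the:bcd_subproblem}, or simply record it as the standard hypothesis inherited from the PALM framework. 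Item (4) then follows because $f_0$ is a degree-four polynomial in the joint entries of $(\bm U,\bm V)$, hence $C^\infty$; its gradient has a continuous Jacobian bounded on the compact box, and the mean-value inequality upgrades this to a global Lipschitz estimate on that box.

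The main obstacle is item (5), the Kurdyka–Łojasiewicz property, where the phrase ``$[a,b]$ and integer numbers are semi-algebraic'' must be read with care: the full set $\Z$ is \emph{not} semi-algebraic in $\R$, since semi-algebraic subsets of the line are finite unions of points and intervals. The remedy is that only the \emph{effective domain} $\Z_{[\alpha,\beta]}$ enters $\Psi$, and this set is finite, hence a finite union of singletons and therefore semi-algebraic, so its indicator is a semi-algebraic function. Combining this with the fact that $f_0$ is a polynomial, hence semi-algebraic, and that finite sums of semi-algebraic functions remain semi-algebraic, I conclude that $\Psi$ is a proper lower semicontinuous semi-algebraic function. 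Invoking the result that every such function satisfies the KL inequality at each point of its domain then closes item (5) and completes the verification of all assumptions required by the PALM convergence theory.
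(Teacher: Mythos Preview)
Your verification follows the same item-by-item approach as the paper, which embeds its justifications inline within the proposition statement rather than in a separate proof. Your treatment is in fact more careful than the paper's in item (5): the paper asserts that ``the sets $[a,b]$ and integer numbers are semi-algebraic,'' but as you correctly note, the full set $\Z$ is \emph{not} semi-algebraic in $\R$; what actually matters is that the effective domain $\Z_{[\alpha,\beta]}$ is finite and hence semi-algebraic, which is exactly the repair you supply. You also flag the lower-bound issue for the moduli in item (3), which the paper simply asserts without comment; your remark that this is either handled by initialization plus monotonicity or inherited as a standing hypothesis is an honest accounting of a point the paper glosses over.
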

By Proposition \ref{prop:assumptions}, the optimization problem \eqref{eq:qmf_surrogate} can be solved by the iterates in \eqref{eq:palm_updates}, due to the following proposition:
\begin{prop}[Global convergence \cite{bolte2014proximal}]\label{prop:convergence}
    With the assumptions in proposition \ref{prop:assumptions} being met by the problem \eqref{eq:qmf_surrogate}, let $\seq{(\bm U^k, \bm V^k)}$ be a sequence generated by the iterates in \eqref{eq:palm_updates}. Then the sequence converges to a critical point $(\bm U^\star, \bm V^\star)$ of the problem \eqref{eq:qmf_surrogate}, where $0 \in \partial \Psi(\bm U^\star, \bm V^\star)$, with $\partial$ as the subdifferential of $\Psi$.
\end{prop}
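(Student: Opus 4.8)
The plan is to recognize Proposition \ref{prop:convergence} as an instance of the abstract convergence theory for descent schemes satisfying the Kurdyka--\L{}ojasiewicz (KL) inequality, in the style of Attouch--Bolte--Svaiter. Concretely, I would show that the PALM iterates \eqref{eq:palm_updates} satisfy three structural conditions for the merit function $\Psi$ of \eqref{eq:qmf_surrogate}: (H1) a sufficient-decrease property of the form $\Psi(\bm U^{k+1},\bm V^{k+1}) + \tfrac{\rho}{2}\big(\|\bm U^{k+1}-\bm U^k\|^2 + \|\bm V^{k+1}-\bm V^k\|^2\big) \le \Psi(\bm U^k,\bm V^k)$ for some $\rho>0$; (H2) a relative-error (subgradient) bound $\operatorname{dist}(0,\partial\Psi(\bm U^{k+1},\bm V^{k+1})) \le b\big(\|\bm U^{k+1}-\bm U^k\| + \|\bm V^{k+1}-\bm V^k\|\big)$; and (H3) a continuity condition on cluster points. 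Items 1--5 of Proposition \ref{prop:assumptions} are precisely tailored to supply the hypotheses these conditions require, so the strategy is to verify (H1)--(H3) and then close the argument using the KL property.

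For (H1), I would start from the defining inequality of the proximal map in \eqref{eq:palm_updates}: evaluating the surrogate minimized by the $\bm U$-update at the previous point $\bm U^k$ yields $f(\bm U^{k+1}) + \tfrac{c_k}{2}\|\bm U^{k+1}-\bm U^k\|^2 + \langle \nabla_{\bm U} f_0(\bm U^k,\bm V^k), \bm U^{k+1}-\bm U^k\rangle \le f(\bm U^k)$. Combining this with the descent lemma for $f_0(\cdot,\bm V^k)$, whose partial gradient is $L_1(\bm V^k)$-Lipschitz by item 2 of Proposition \ref{prop:assumptions}, cancels the inner-product terms and gives a per-block decrease of $\tfrac{c_k-L_1(\bm V^k)}{2}\|\bm U^{k+1}-\bm U^k\|^2$; the analogous bound holds for the $\bm V$-step. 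Because the step sizes are chosen above the moduli and these moduli are bounded along the iteration (item 3), the coefficient $\rho$ is uniformly positive. Since $\Psi\ge 0$, telescoping (H1) proves $\sum_k \big(\|\bm U^{k+1}-\bm U^k\|^2 + \|\bm V^{k+1}-\bm V^k\|^2\big)<\infty$, so successive increments vanish; and the indicator $\delta_{[a,b]}$ confines the iterates to the compact box $[a,b]^{M\times R}\times[a,b]^{N\times R}$, whence the sequence is bounded and its cluster-point set $\omega$ is nonempty, compact, connected, with $\Psi$ constant on $\omega$.

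For (H2), I would write the first-order optimality condition of each proximal subproblem, namely $c_k(\bm U^k-\bm U^{k+1}) - \nabla_{\bm U} f_0(\bm U^k,\bm V^k) \in \partial f(\bm U^{k+1})$, so that $c_k(\bm U^k-\bm U^{k+1}) - \nabla_{\bm U} f_0(\bm U^k,\bm V^k) + \nabla_{\bm U} f_0(\bm U^{k+1},\bm V^{k+1}) \in \partial_{\bm U}\Psi(\bm U^{k+1},\bm V^{k+1})$, and likewise for $\bm V$. The global Lipschitz continuity of the full gradient $\nabla f_0$ on the bounded box (item 4) then bounds the norm of this subgradient by a constant multiple of the increments, which is exactly (H2). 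Together with the vanishing of increments and the closedness of the graph of $\partial\Psi$ under $\Psi(\bm U^k,\bm V^k)\to\Psi^\star$, this already forces every cluster point to be critical, i.e.\ $0\in\partial\Psi(\bm U^\star,\bm V^\star)$.

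The hard part will be upgrading subsequential convergence to convergence of the whole sequence, and this is where the KL property (item 5, established via semialgebraicity of $\Psi$) is essential. The plan is to apply the uniformized KL inequality on the compact set $\omega$, obtaining a concave desingularizing function $\varphi$ with $\varphi'\big(\Psi(\bm U^k,\bm V^k)-\Psi^\star\big)\,\operatorname{dist}(0,\partial\Psi(\bm U^k,\bm V^k))\ge 1$ near $\omega$. Combining this with (H1) and (H2) and using concavity of $\varphi$ to telescope the differences $\varphi(\Psi(\bm U^k,\bm V^k)-\Psi^\star)-\varphi(\Psi(\bm U^{k+1},\bm V^{k+1})-\Psi^\star)$ yields a finite bound on $\sum_k\big(\|\bm U^{k+1}-\bm U^k\|+\|\bm V^{k+1}-\bm V^k\|\big)$. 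Finite length makes the iterates Cauchy, so they converge to a single limit $(\bm U^\star,\bm V^\star)$, which by the preceding paragraph is a critical point of \eqref{eq:qmf_surrogate}. The delicate bookkeeping is ensuring the iterates eventually enter and stay in the neighborhood where the KL inequality applies; this is controlled by the vanishing increments together with the monotone decrease of $\Psi$.
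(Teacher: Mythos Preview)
Your proposal is correct and faithfully reconstructs the abstract convergence argument of Bolte--Sabach--Teboulle (PALM): verify sufficient decrease (H1), a relative subgradient error bound (H2), and a continuity/subsequential-limit condition (H3), then invoke the uniformized KL inequality to obtain finite length and whole-sequence convergence to a critical point. The paper, however, does not give its own proof of this proposition at all; it simply states it as a direct citation of \cite{bolte2014proximal}, having used Proposition~\ref{prop:assumptions} to check that the required hypotheses hold for \eqref{eq:qmf_surrogate}. So your write-up is not an alternative route but rather an expansion of the black-box result the paper imports; what it buys you is self-containment, at the cost of reproducing a standard argument that the authors chose to outsource.
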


It is noted that the so-called \emph{forward} steps $\bm U^k - \frac{1}{c_k} \nabla_{\bm U} f_0(\bm U^k, \bm V^k)$ and $\bm V^k - \frac{1}{d_k} \nabla_{\bm V} f_0(\bm U^{k+1}, \bm V^k)$ in the $\prox$ operators in \eqref{eq:palm_updates} are replaced by the simple closed-form solutions $\nicefrac{\bm{E}_r \bm{v}_r}{\lVert \bm{v}_r \rVert^2}$ and $\nicefrac{\bm{E}_r^\mathsf{T} \bm{u}_r}{\lVert \bm{u}_r \rVert^2}$ in Algorithm \ref{alg:bcd_for_qmf} at steps \ref{alg:step:u_update:1} and \ref{alg:step:v_update:1} (cf. \eqref{eq:bcd_closed_form_subproblem_u} and \eqref{eq:bcd_closed_form_subproblem_v}), respectively. In the case where the iterates \eqref{eq:palm_updates} are extended to multi-block updates, each block represents one column. This is thanks to the special form of the functions $f_0(\cdot, \bm V^k)$ and $f_0(\bm U^{k+1}, \cdot)$ being quadratic functions, each having a global optimal point, which ensures a descent in each forward step.
Furthermore, the proximal operators $\prox^f_{c_k}$ and $\prox^g_{d_k}$ can efficiently be implemented by the operators $\round$ and $\clamp_{[\alpha,\beta]}$ in \eqref{eq:bcd_closed_form_subproblem_u} and \eqref{eq:bcd_closed_form_subproblem_v} (and equivalently in Algorithm \ref{alg:bcd_for_qmf} at steps \ref{alg:step:u_update:2} and \ref{alg:step:v_update:2}). The equivalence of these steps is proven in the following lemma.

\begin{lem}[$\prox$ implementation]
    Consider the operators $\round$ and $\clamp_{[\alpha,\beta]}$ defined in \eqref{eq:bcd_closed_form_subproblem_u} and \eqref{eq:bcd_closed_form_subproblem_v}.
    Then $\prox^f_{c_k}(\bm W) = \round(\clamp_{[\alpha,\beta]}(\bm W))$ and $\prox^g_{d_k}(\bm Z) = \round(\clamp_{[\alpha,\beta]}(\bm Z))$ for any $\bm W\in \R^{M\times R}$, $\bm Z\in\R^{N\times R}$, and $\round(\clamp_{[\alpha,\beta]}(\cdot))$ being an elementwise operator on the input matrices.
\end{lem}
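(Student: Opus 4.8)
The plan is to exploit the fact that $f$ and $g$ are \emph{pure} indicator functions, so their proximal maps collapse to Euclidean projections, and then to verify that projecting onto the integer box $\Z_{[\alpha,\beta]}$ is exactly realized by the composition of $\round$ and $\clamp_{[\alpha,\beta]}$.

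First I would observe that $f(U_{:r}) = \delta_{[\alpha,\beta]}(U_{:r}) + \delta_{\Z}(U_{:r}) = \delta_{\mathcal{S}}(U_{:r})$ with $\mathcal{S} \triangleq \Z_{[\alpha,\beta]}^{M} = ([\alpha,\beta]\cap\Z)^{M}$, a nonempty and finite (hence closed) set, and likewise $g = \delta_{\mathcal{S}'}$ with $\mathcal{S}' = \Z_{[\alpha,\beta]}^{N}$. Since adding the finite constant $0$ on $\mathcal{S}$ and scaling the objective by the positive constant $c_k$ do not change a minimizer, the definition of the proximal operator gives
\[
\prox^f_{c_k}(\bm W) = \argmin_{\bm Y}\Big\{\delta_{\mathcal{S}}(\bm Y) + \tfrac{c_k}{2}\|\bm Y - \bm W\|^2\Big\} = \argmin_{\bm Y\in\mathcal{S}} \|\bm Y - \bm W\|^2 ,
\]
the Euclidean projection of $\bm W$ onto $\mathcal{S}$, with the analogous statement for $\prox^g_{d_k}$. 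Because $\mathcal{S}$ is a Cartesian product of copies of $\Z_{[\alpha,\beta]}$, the squared distance separates over entries, so the projection is computed coordinatewise: $\big(\prox^f_{c_k}(\bm W)\big)_{ir} = P_{\Z_{[\alpha,\beta]}}(W_{ir})$, and similarly for $\prox^g_{d_k}(\bm Z)$.

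The remaining work is the scalar identity $P_{\Z_{[\alpha,\beta]}}(w) = \round(\clamp_{[\alpha,\beta]}(w))$ for every $w\in\R$, where I would use crucially that $\alpha,\beta\in\Z$. I would split into three cases: if $w<\alpha$, every element of $\Z_{[\alpha,\beta]}$ is $\ge\alpha$, so the nearest one is $\alpha$, while $\round(\clamp_{[\alpha,\beta]}(w)) = \round(\alpha) = \alpha$; the case $w>\beta$ is symmetric; and if $w\in[\alpha,\beta]$, then $\clamp_{[\alpha,\beta]}(w)=w$, the nearest integer to $w$ is $\round(w)$, and $\alpha\le\round(w)\le\beta$ because $\alpha\le w\le\beta$ with $\alpha,\beta$ integers, so $\round(w)\in\Z_{[\alpha,\beta]}$ is indeed the projection. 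Applying this entrywise and recombining yields $\prox^f_{c_k}(\bm W)=\round(\clamp_{[\alpha,\beta]}(\bm W))$ and $\prox^g_{d_k}(\bm Z)=\round(\clamp_{[\alpha,\beta]}(\bm Z))$.

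I do not anticipate a genuine obstacle; the only delicate point is that at a half-integer $w\in(\alpha,\beta)$ two integers are equidistant and $P_{\Z_{[\alpha,\beta]}}$ is truly set-valued, so I would make explicit that $\round$ denotes a fixed single-valued nearest-integer rule and that $\round(\clamp_{[\alpha,\beta]}(\cdot))$ then selects one of the valid projections — which is all that the membership form $\prox\ni\cdot$ in \eqref{eq:palm_updates} requires. Finally, to close the loop with Algorithm~\ref{alg:bcd_for_qmf} and Theorem~\ref{the:bcd_subproblem}, I would remark that when $\alpha,\beta\in\Z$ one has $\round(\clamp_{[\alpha,\beta]}(w)) = \clamp_{[\alpha,\beta]}(\round(w))$ (the two orders can differ only for $w\notin[\alpha,\beta]$, where both return $\alpha$ or $\beta$), so steps~\ref{alg:step:u_update:2} and~\ref{alg:step:v_update:2} implement precisely these proximal operators and the PALM convergence guarantee of Proposition~\ref{prop:convergence} transfers to the proposed algorithm.
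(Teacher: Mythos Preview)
Your proposal is correct and follows essentially the same approach as the paper: both identify $\prox^f_{c_k}$ as the Euclidean projection onto $\Z_{[\alpha,\beta]}^{M\times R}$ and then verify that this projection is realized by $\round\circ\clamp_{[\alpha,\beta]}$. The paper packages the case analysis via a decomposition of the Frobenius norm into three pieces (entries of $\bm W$ below $\alpha$, in $[\alpha,\beta]$, above $\beta$) using auxiliary constant matrices $\bm A,\bm B$, whereas you do the equivalent scalar three-case argument directly; your version is slightly more explicit and also addresses the half-integer tie-breaking that the paper leaves implicit.
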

\begin{proof}
    Define the following norms for a given matrix $\bm W \in \R^{M\times R}$:
    \begin{align*}
         \|\bm W\|_{[a,b]}^2 \triangleq \sum_{i,j \mid a \leq \bm W_{ij} \leq b} \bm W_{ij}^2, \quad
        \|\bm W\|_a^2 \triangleq \sum_{i,j \mid \bm W_{ij} < a} \bm W_{ij}^2, \quad                         
         \|\bm W\|_b^2 \triangleq \sum_{i,j \mid \bm W_{ij} > b} \bm W_{ij}^2.
    \end{align*}
    Moreover, note that the $\round$ operator  can be equivalently driven by the following proximal operator:
    \begin{equation}
        \round(\bm W) = \argmin_{\bm U \in \Z^{M\times R}}\{\|\bm U - \bm W\|^2_F\}.
        \label{eq:equivalence_prox_tz}
    \end{equation}
    The proximal operator $\prox^f_{c_k}(\bm W)$ can be rewritten as
    \begin{align*}
        {\rm{pr}} {\rm{ox}}^f_{c_k}(\bm W) &= \argmin_{\bm U \in \R^{M\times R}}\{\delta_{[a,b]}(\bm U) + \delta_{\Z}(\bm U) + \frac{c_k}{2} \|\bm U - \bm W\|^2_F\} \\
                & = \argmin_{\bm U \in \Z_{[a,b]}^{M\times R}}\{\|\bm U - \bm W\|^2_F\}                                                                             \\
                & = \argmin_{\bm U \in \Z_{[a,b]}^{M\times R}}\{\|\bm U - \bm W\|^2_{[a,b]} + \|\bm U - \bm A\|^2_a + \|\bm U - \bm B\|^2_b\}                       \\
                & = \argmin_{\bm U \in \Z^{M\times R}}\{\|\bm U - \bm W\|^2_{[a,b]} + \|\bm U - \bm A\|^2_a + \|\bm U - \bm B\|^2_b\}                               \\
                & = \argmin_{\bm U \in \Z^{M\times R}}\{\|\bm U - \clamp_{[\alpha,\beta]}(\bm W)\|^2_F\}                                                            \\
                & = \round(\clamp_{[\alpha,\beta]}(\bm W)).
    \end{align*}
    The first equality is due to the definition of $\prox$ which is equivalent to the second equality.
    In the third equality the matrices $\bm A\in\R^{M\times R}$ and $\bm B\in\R^{M\times R}$ have elements all equal to $a$ and $b$, respectively.
    The third equality is due to the fact that replacing $\|\bm U - \bm W\|^2_a + \|\bm U - \bm W\|^2_b$ with $\|\bm U - \bm A\|^2_a + \|\bm U - \bm B\|^2_b$ has no effect on the solution of the minimization. The fourth equality is also trivial due to the involved norms in the third equality. The fifth equality can be easily confirmed by the definition of $\clamp_{[\alpha,\beta]}$.
    Finally, in the last equality, \eqref{eq:equivalence_prox_tz} is invoked.
    It is noted that in the implementation, $\round(\clamp_{[\alpha,\beta]}(\cdot)) = \clamp_{[\alpha,\beta]}(\round(\cdot))$ due to the integrality of the bounds $\alpha,\beta \in \Z$. A similar proof can be trivially followed for $\prox^g_{d_k}(\bm Z) = \round(\clamp_{[\alpha,\beta]}(\bm Z))$ as well.
\end{proof}
Now that the equivalence of iterates \eqref{eq:palm_updates} with the simple and closed-form steps in Algorithm \ref{alg:bcd_for_qmf} is fully established, and the assumptions required for the convergence are verified in proposition \ref{prop:assumptions} to be met by problems \eqref{eq:qmf_surrogate} and \eqref{eq:qmf_problem}, proposition \ref{prop:convergence} can be trivially invoked to establish the convergence of Algorithm \ref{alg:bcd_for_qmf} to a locally optimal point of problem \eqref{eq:qmf_problem}.

\bibliographystyle{elsarticle-num-names} 
\bibliography{ref.bib}

\end{document}